\def\BibTeX{{\rm B\kern-.05em{\sc i\kern-.025em b}\kern-.08em
    T\kern-.1667em\lower.7ex\hbox{E}\kern-.125emX}}
\newtheorem{definition}{Definition}
\newtheorem{assumption}{Assumption}
\newtheorem{theorem}{Theorem}
\newtheorem{lemma}{Lemma}
\newtheorem{fact}{Fact}
\providecommand{\citep}{\cite}
\newcommand{\mycomment}[1]{{\color{gray}// #1}}
\newcommand{\ve}[1]{\boldsymbol{#1}}  
\newcommand{\vc}[2]{{#1}_{#2}}  
\newcommand{\si}[2]{#1 [\![ #2 ]\!] }  
\newcommand{\so}[2]{#1 [\![ #2 ]\!] }  
\newcommand{\multiset}[1]{\ensuremath{ \{\!\!\{#1\}\!\!\} }}
\newcommand{\ibooth}{\ensuremath{j}}  
\newcommand{\nbooths}{\ensuremath{\ell}}  
\newcommand{\adv}{\ensuremath{\mathcal{A}}}  
\newcommand{\setup}{\ensuremath{\mathsf{Setup}}}        
\newcommand{\register}{\ensuremath{\mathsf{Register}}}  
\newcommand{\prepareRoll}{\ensuremath{\mathsf{PrepER}}} 
\newcommand{\cast}{\ensuremath{\mathsf{Cast}}}          
\newcommand{\indrraudit}{\ensuremath{\mathsf{IndRAudit}}} 
\newcommand{\indcraudit}{\ensuremath{\mathsf{IndCAudit}}} 
\newcommand{\univaudit}{\ensuremath{\mathsf{UnivAudit}}}
\newcommand{\process}{\ensuremath{\mathsf{Process}}}    
\newcommand{\gset}{\ensuremath{\mathcal{O}}}            
\newcommand{\gelg}{\ensuremath{\mathcal{O}_{\mathsf{elg}}}}       
\newcommand{\gcapture}{\ensuremath{\mathcal{O}_{\mathsf{capture}}}} 
\newcommand{\glive}{\ensuremath{\mathcal{O}_{\mathsf{live}}}} 
\newcommand{\gmatch}{\ensuremath{\mathcal{O}_{\mathsf{match}}}}   
\newcommand{\guid}{\ensuremath{\mathcal{O}_{\mathsf{uid}}}}       
\newcommand{\pietoev}{\ensuremath{\Pi_{\mathsf{E2E}\text{-}\mathsf{V}}}} 
\newcommand{\ocast}{\ensuremath{\mathsf{OCast}}} 
\newcommand{\oreg}{\ensuremath{\mathsf{OReg}}} 
\newcommand{\RER}{\mathsf{ERR}}  
\newcommand{\ERR}{\mathsf{ERR}}  
\newcommand{\ER}{\mathsf{ER}}  
\newcommand{\CI}{\mathsf{CI}}  
\newcommand{\EV}{\mathsf{EV}}  
\newcommand{\EVh}{\EV_{\mathsf{h}}}  
\newcommand{\uid}{\ensuremath{\mathsf{uid}}}  
\newcommand{\vid}{\ensuremath{\mathsf{vid}}}  
\newcommand{\bidR}{\ensuremath{\mathsf{rp}}}  
\newcommand{\bidpreR}{\ensuremath{\mathsf{prp}}}  
\newcommand{\bidC}{\ensuremath{\mathsf{cp}}}  
\newcommand{\bidpreC}{\ensuremath{\mathsf{pcp}}}  
\newcommand{\ballot}[2]{\ensuremath{\vc{b}{(#1, #2)}}}  
\newcommand{\iballot}{\ensuremath{\beta}}  
\newcommand{\nballots}[1]{\ensuremath{n_{\mathsf{b}_{#1}}}}  
\newcommand{\ballotsbooth}[1]{\ensuremath{(\ballot{\iballot}{#1})_{\iballot \in [\nballots{#1}]}}}  
\newcommand{\ballots}[1]{\ensuremath{(\ballotsbooth{\ibooth})_{\ibooth \in [\nbooths]}}}  
\newcommand{\ev}[1]{\ensuremath{\vc{c}{#1}}}  
\newcommand{\ed}{\ensuremath{\mathsf{ed}}}  
\newcommand{\rrec}{\ensuremath{\mathsf{rr}}}  
\newcommand{\rvalid}{\ensuremath{\mathsf{rvalid}}}  
\newcommand{\rresp}{\ensuremath{\mathsf{elg}}}  
\newcommand{\crec}{\ensuremath{\mathsf{cr}}}  
\newcommand{\cvalid}{\ensuremath{\mathsf{cvalid}}}  
\newcommand{\ver}{\ensuremath{\mathsf{ver}}}  
\newcommand{\pk}{\ensuremath{\mathsf{pk}}}  
\newcommand{\sk}{\ensuremath{\mathsf{sk}}}  
\newcommand{\UA}{\ensuremath{\mathsf{UA}}}  
\newcommand{\nizk}{\ensuremath{\mathsf{NIZK}}}
\newcommand{\nizkver}{\ensuremath{\mathsf{NIZKVer}}}  
\newcommand{\Nat}{\mathbb{N}}  
\newcommand{\Voters}{\ensuremath{\mathbb{V}}}  
\newcommand{\keygen}{\ensuremath{\mathsf{Keygen}}}  
\newcommand{\enc}{\ensuremath{\mathsf{Enc}}}  
\newcommand{\encscheme}{\ensuremath{\mathsf{E}}}
\newcommand{\renc}{\ensuremath{\mathsf{REnc}}}
\newcommand{\ths}{\ensuremath{\encscheme^{\mathsf{th}}}}
\newcommand{\thsg}{\ensuremath{\ths.\keygen}}
\newcommand{\thse}{\ensuremath{\ths.\enc}}
\newcommand{\thsre}{\ensuremath{\ths.\renc}}
\newcommand{\tdec}{\ensuremath{\mathsf{TDec}}}
\newcommand{\thsd}{\ensuremath{\ths.\tdec}}
\newcommand{\h}{\ensuremath{\mathsf{h}}}
\newcommand{\rh}{\ensuremath{\mathsf{rh}}}
\newcommand{\rdh}{\ensuremath{\mathsf{rdh}}}
\newcommand{\ch}{\ensuremath{\mathsf{ch}}}
\newcommand{\cdh}{\ensuremath{\mathsf{cdh}}}
\newcommand{\mult}[2]{\ensuremath{\theta_{#1}(#2)}}
\newcommand{\valid}{\ensuremath{\mathsf{Valid}}}
\newcommand{\accept}{\ensuremath{``\mathsf{Accept}"}}
\newcommand{\reject}{\ensuremath{``\mathsf{Reject}"}}
\newcommand{\alreadyreg}{\ensuremath{``\mathsf{DupReg}"}}
\newcommand{\alreadycast}{\ensuremath{``\mathsf{DupCast}"}}
\newcommand{\env}{\ensuremath{\mathsf{env}}}
\newcommand{\envR}{\ensuremath{\env_{\mathsf{r}}}}
\newcommand{\envC}{\ensuremath{\env_{\mathsf{c}}}}
\newcommand{\envpreR}{\ensuremath{\env_{\mathsf{pr}}}}
\newcommand{\envpreC}{\ensuremath{\env_{\mathsf{pc}}}}
\renewcommand{\ev}{\ensuremath{\mathsf{ev}}}
\renewcommand{\r}{\ensuremath{\mathsf{r}}}
\renewcommand{\c}{\ensuremath{\mathsf{c}}}
\renewcommand{\d}{\ensuremath{\mathsf{d}}}
\newcommand{\s}{\ensuremath{\mathsf{s}}}
\begin{document}

\title{Publicly auditable privacy-preserving electoral rolls}
\author{\IEEEauthorblockN{Prashant Agrawal\IEEEauthorrefmark{1}\IEEEauthorrefmark{2}} \and \IEEEauthorblockN{Mahabir Prasad Jhanwar\IEEEauthorrefmark{2}} \and \IEEEauthorblockN{Subodh Vishnu Sharma\IEEEauthorrefmark{1}} \and \IEEEauthorblockN{Subhashis Banerjee\IEEEauthorrefmark{2}} \and
\IEEEauthorblockA{\IEEEauthorrefmark{1}Department of Computer Science and Engineering \\ Indian Institute of Technology Delhi \\ New Delhi, India \\ \texttt{\footnotesize \{prashant,svs\}@cse.iitd.ac.in}} \and
\IEEEauthorblockA{\IEEEauthorrefmark{2}Department of Computer Science  and \\ Center for Digitalisation, AI and Society \\ Ashoka University \\ Sonipat, India \\ \texttt{\footnotesize \{mahavir.jhawar,suban\}@ashoka.edu.in}}
}

\maketitle

\begin{abstract}
	While existing literature on electronic voting has extensively addressed verifiability of voting protocols, the vulnerability of electoral rolls in large public elections remains a critical concern. To ensure integrity of electoral rolls, the current practice is to either make electoral rolls public or share them with the political parties. However, this enables  construction of detailed voter profiles and selective targeting and manipulation of voters, thereby undermining the fundamental principle of free and fair elections. In this paper, we study the problem of designing publicly auditable yet privacy-preserving electoral rolls. We first formulate a threat model and provide formal security definitions. We then present a protocol for creation, maintenance and usage of electoral rolls that mitigates the threats. Eligible voters can verify their inclusion, whereas political parties and auditors can statistically audit the electoral roll. Further, the audit can also detect polling-day ballot stuffing and denials to eligible voters by malicious polling officers. The entire electoral roll is never revealed, which prevents any large-scale systematic voter targeting and manipulation.
\end{abstract}

\begin{IEEEkeywords}
	electronic voting, eligibility, electoral rolls, auditability
\end{IEEEkeywords}

\section{Introduction}
\label{sec:intro}

Most of the research in electronic voting have concentrated on end-to-end verifiable voting protocols \citep{pretavoter,scantigrity,punchscan,scratchandvote,bingovoting,starvote} which safeguard against various attacks on the vote recording and counting systems. However, it is relatively simpler for an attacker to instead target the  \emph{electoral roll}, the official list of eligible voters. Unlike the tightly regulated polling process, the voter registration process for creating the electoral roll spans several months, involves multiple distributed entities working on the ground, and is generally less protected, making it much easier to attack. Incidences and allegations of both administrative errors and active manipulation of the voter list --- including addition of dead people, minors, non-citizens and even ``phantom'' voters, malicious removal of ``unwanted'' eligible voters, and duplicate entries for some eligible voters --- are all frequently reported problems \citep{election-rigging-how-to-fight,elections-without-democracy,how-to-rig-an-election,retnakumar-electoral-rolls-india,kodali-electoralroll-deletion,bhatnagar-electoralroll-deletion,special-issue-electoral-fraud-india-pak}.

A common approach to engender public confidence in the integrity of electoral rolls is to make them public \citep{RER60} to enable public audit and grievance redressal. However, this introduces a wide array of privacy issues. The electoral roll  typically includes fields like name, gender, date of birth, photo, address, unique identifiers such as voter ID  or social security numbers, and, in some cases, even party affiliations of voters \citep{howard-kreiss-survey-aus-can-uk-us}. As many political and social observers  have highlighted \citep{onselen-electoral-databases,errington-suiting-themselves,howard-kreiss-survey-aus-can-uk-us,special-issue-electoral-fraud-india-pak,hunter,voter-privacy-big-data}, voters lists are routinely utilised  to target and manipulate voters, particularly in closely contested elections.  Voters lists often have enough information to guess voters' voting preferences, either directly if the party affiliation is  included, or indirectly  through  proxies such as religion or social identities predicted using voter names \citep{chaturvedi-religion-predictor}, or income predicted using addresses. This, along with the fact that the electoral roll data can be linked to other available databases, often via the unique identifiers, allows construction of  even more detailed voter profiles to identify strong supporters, strong opposers and swing voters,  economically and socially vulnerable voters, and individual voters' lifestyle preferences and  interests. As an example, the UK Conservative Party claimed to achieve an 82\% success rate in predicting who would vote for them by tracking voters on over 400 social characteristics using a voter tracking software called Voter Vault \citep{watt-borger}. Similar tactics have allegedly been employed  in the US, Canada and Australia too \citep{howard-kreiss-survey-aus-can-uk-us}.  

Such detailed voter profiles can then perhaps be used for microtargeting and canvassing swing voters through calls, mails or home visits, by distributing cash, alcohol or gifts, or by fear mongering \citep{onselen-electoral-databases,errington-suiting-themselves,howard-kreiss-survey-aus-can-uk-us,special-issue-electoral-fraud-india-pak}; lobbying wealthy voters for party donations \citep{errington-suiting-themselves,howard-kreiss-survey-aus-can-uk-us}; creating hurdles for opposition voters \citep{special-issue-electoral-fraud-india-pak}; maliciously removing culturally distinct names from the electoral roll or raising frivolous objections against their inclusion \cite{kodali-electoralroll-deletion,bhatnagar-electoralroll-deletion,toi-electoralroll-fake-deletion-requests}, etc.  Such profiling often favour the incumbent parties and parties with access to administrative and financial resources and disadvantage the opposition, minor parties and independents \citep{errington-suiting-themselves,howard-kreiss-survey-aus-can-uk-us}. Besides, such excessive focus on data engineering hurts electoral democracy.

The alternative approach of issuing eligibility credentials to eligible voters and checking them during polling suffers from both manipulation by corrupted credential issuers and also disenfranchisement of voters who are unable to safekeep the credential secrets.

Finally, the problem extends beyond just ensuring correct electoral rolls. Even given a correctly prepared electoral roll, auditing that the polling-booth eligibility verification process using the electoral roll was sound --- no ballot stuffing happened (no ineligible voters voted, no eligible voters voted more than once, and no ballots were injected against absentee eligible voters) and no honest eligible voters were denied their right to vote --- is challenging.

In this paper, we examine balancing the conflicting requirements of public auditability and privacy of electoral rolls and polling-booth eligibility verification processes.  To this end, we make the following main contributions:
\begin{enumerate}[leftmargin=*]
	\item We identify the threats of electoral roll manipulation, ballot stuffing, voter denials and privacy violations, and outline the design requirements for a secure electoral roll and eligibility verification protocol (\S\ref{sec:requirements}).
	\item We provide formal security definitions for these requirements (\S\ref{sec:formalisation}). Unlike prior art \citep{juels-coercion-resistance,araujo-coercion-resistance,spycher-coercion-resistance-linear,essex-cobra,clark-selections,araujo-coercion-resistance-fix,ktv-helios}, we do not require any trusted authority for issuing eligibility credentials, or voters to compute zero-knowledge proofs. Our definitions are based on a natural and legally-mandated notion of eligibililty based on factors like age, citizenship, etc.
	\item We propose a secure electoral roll protocol (\S\ref{sec:protocol}) and formally prove its security (\S\ref{sec:analysis}) under certain assumptions (see below). Our protocol is designed for  ``bare-handed'' voters, not requiring them to carry any secret keys or trusted devices, and thus suits large public elections.
	\item We also implement our complete protocol and provide its source code \citep{artifact}. We also provide a program to estimate our protocol's concrete security guarantees in real elections (see \S\ref{sec:pracs}). 
\end{enumerate}

Our protocol relies on two main assumptions. First, we assume the existence of a primary Sybil-resistant identity system that prevents creation of multiple identities of a single person. Such systems are already operational in  national digital identity systems, e.g., in Estonia and India. However, directly using such a primary identity system for electoral processes risks linking voters' electoral data with other databases, thereby compromising voter privacy. To mitigate this, we introduce a mechanism to create a secondary election-specific identity that maintains the Sybil-resistance property but ensures voter anonymity by making it unlinkable to the corresponding primary identity. Our second assumption is that the liveness of a voter in a given electoral environment can be accurately established via facial photographs attested by trusted hardware modules or multiple mutually adversarial local observers. We need this assumption to establish the voter's presence in the polling booth in a bare-handed way. We discuss our assumptions in detail in \S\ref{sec:identity-eligibility}.

\subsection{Related work}
\label{sec:related}

End-to-end verifiable (E2E-V) voting protocols provide cryptographic guarantees to voters that their votes are cast-as-intended, recorded-as-cast and counted-as-recorded. 
However, pollsite E2E-V voting protocols (e.g., \citep{pretavoter,scratchandvote,punchscan}) typically do not address eligibility verifiability and leave it to traditional manual processes.
Eligibility verifiability has been studied in the context of internet voting \citep{juels-coercion-resistance,spycher-coercion-resistance-linear,clark-selections,araujo-coercion-resistance-fix,ktv-helios}, where a trusted registrar issues eligibility credentials to voters who then present them before voting. These works often achieve eligibility verifiability while also maintaining \emph{participation privacy}, i.e., hiding which voters voted and which did not, to protect against forced abstention attacks.

Our work differs from these works in three important aspects. First, these approaches consider the credential-issuing authority as trusted, whereas realistically these authorities may also be corrupted. Second, we aim for ``bare-handed'' protocols where voters are not required to carry any secret keys or trusted devices to prove their eligibility. Non-bare-handed solutions are ill-suited for electorates with low digital literacy or where loss of keys or confiscation of voters' devices by a coercer cannot be ruled out. Third, existing approaches focus only on protecting participation privacy but typically keep the voter list public, whereas our approach goes further to protect voters' identity information from the voter list too, thereby mitigating risks of voter profiling attacks.

\section{Design requirements}
\label{sec:requirements}

\subsection{Current voter registration system}
\label{ssec:current-system}

We begin by describing the currently prevalent voter registration and polling process in a typical public election:
\begin{itemize}[leftmargin=*]
	\item \emph{(Registration).} Voters initiate registration by filing an application, typically containing their name, photograph, residential address, date of birth, age, gender, etc.\footnote{Election officials may also file applications on behalf of some voters.} A registration officer (RO) is responsible for accepting voters' registration information. An eligibility verification officer (EO) assigned for the voter's residential area verifies the voter's eligibility for the election using the submitted data, e.g., by matching the application details with existing public records, verifying voter's age and even making home visits to verify that the voter  indeed resides at the claimed address. If satisfied, the EO approves the voter's entry to be included in the electoral roll. The electoral roll thus created is made public so that anyone can raise objections regarding missing entries, incorrect data or even fake entries and get them resolved till a pre-polling deadline. 
	
	\item \emph{(Polling).} Polling takes place at a polling booth, where the presiding polling officer (PO) checks voters' entries on the electoral roll and marks them off. If a voter's entry is not found or if it is already marked, the PO usually simply denies the voter the chance to vote. Otherwise, the voter is allowed to vote and their (encrypted) vote is recorded against their entry in the electoral roll. 
\end{itemize}

\subsection{Threat model}
\label{ssec:threat-model}

Now we enumerate threats of electoral roll manipulation, both during registration and polling, and threats to voter privacy. Although manipulation threats are sometimes trivial to avoid in public electoral rolls, they become non-trivial to avoid in privacy-preserving electoral rolls.

\subsubsection{Manipulation during registration}
\label{sssec:threats-creation-electoral-rolls}

During registration, the following manipulation threats exist:

\begin{itemize}[leftmargin=*]
	\item \emph{(Denial of registration to eligible voters).} Corrupted ROs and EOs may ignore registration requests of eligible voters and deny registration to them without providing any publicly contestable proof of denial.
	\item \emph{(Data entry errors during registration).} During the voter registration process, misspellings, alternate names, wrong PIN codes or other data entry errors can occur. These errors potentially cause verification failures during polling and disenfranchise the voters.
	\item \emph{(Registration of ineligible or fake voters).} Corrupted ROs and EOs may add fake voters --- dead people, minors, non-citizens, people from a different constituency, or simply ``phantom voters" --- to the electoral roll.
	\item \emph{(Multiple registrations of a single voter).} Corrupted ROs and EOs may register the same voter multiple times, with possibly different aliases, to the electoral roll. Multiple registrations allow ballot stuffing on polling day, either if the same voter is allowed to vote twice or if fake votes are injected by the PO against their duplicated entry.
	\item \emph{(Deletion of previously registered eligible voters).} Corrupted ROs and EOs may delete eligible voters, perhaps belonging to a particular social group, from the electoral roll. This may be done under the guise of purging ineligible voters from the list and without issuing any notice to the affected individuals. 
\end{itemize}

\subsubsection{Manipulation during polling}
\label{sssec:threats-usage-electoral-rolls}

The following threats exist at the polling stage when electoral rolls are used:

\begin{itemize}[leftmargin=*]
	\item \emph{(Denial of voting rights to eligible voters).} During polling, POs may maliciously deny an eligible voter the right to vote or may genuinely misverify the voters' identity or existence in the electoral roll. In privacy-preserving electoral rolls, there may be no easy way to contest the PO's decision.
	\item \emph{(Allowing ineligible voters to vote).} Corrupted POs may allow voters not registered in the electoral roll to cast a vote or inject fake votes against ineligible voters added during registration. 
	\item \emph{(Allowing an eligible voter to vote more than once).} Corrupted POs may allow an eligible voter to vote more than once. They may also inject spurious votes on behalf of eligible voters, potentially after obtaining an access token from when the voter casts their genuine vote for the first time.
	\item \emph{(Injecting votes against an absentee eligible voter).} Corrupted POs may inject votes against eligible voters who did not come to the polling booth to cast their vote.
\end{itemize}

\subsubsection{Privacy threats}
\label{sssec:threats-voter-privacy}

The following privacy threats exist:

\begin{itemize}[leftmargin=*]
	\item \emph{(Voter profiling attacks.)} As mentioned in \S\ref{sec:intro}, voting preference  of a voter may be inferred  with sufficiently high accuracy from information present in the electoral roll, especially when this information is combined with other publicly available information. In addition, political parties may also try to learn other details, e.g., socioeconomic background, lifestyle preferences, shopping habits, etc., to create a detailed profile about each voter. These profiles can then be used to microtarget and manipulate voters.
	\item \emph{(Forced absention attacks.)} Further, if information about whether a voter participated in voting or not gets leaked, it allows coercion of the voter into abstaining from voting. Although participation information necessarily gets leaked to local observers physically present in the polling booth, an electronic voting system should not allow a \emph{remote} observer to systematically obtain and analyse this information.
\end{itemize}

\subsection{High-level requirements}
\label{ssec:high-level-requirements}

Given these threats, we envision that in a secure electoral roll protocol, there must be publicly verifiable audit processes --- which can be invoked by voters, political parties or other interested auditors  --- to verify that no manipulation has taken place, without compromising voters' privacy. 

\subsubsection{Soundness}
\label{sssec:soundness}

The soundness requirement is that it must be computationally hard to make the audits pass with more than a given probability if the electoral roll is manipulated, i.e., if the published encrypted votes do not correspond to the votes cast by eligible voters or if some eligible voters are denied during registration or polling. These audits necessarily involve manual checks on the ground such as cross-checks with local public records, visits to the neighbourhood, etc. Designing an efficient audit process  and while protecting voters' privacy is an interesting technical problem. We discuss these requirements next. 

\subsubsection{Privacy}
\label{sssec:privacy}

A strict privacy requirement would be that the electoral roll should not leak any information about the voters' details other than whether they are eligible. However, we target a weaker version that allows the adversary to obtain identity details of a few randomly selected voters. This is enough to mitigate the  privacy threats posed by electoral rolls. Taken individually, information posted on an electoral roll is not very sensitive (currently all this information is public!); however, it becomes problematic when the entire voter list is made public, enabling systematic building of detailed voter profiles and microtargeting individual vulnerable voters. Leaking a few \emph{randomly} selected voters' identity information would not enable systematic attacks, thereby making them unattractive for the attackers. We also require that even though the adversary might be able to learn eligibility data of a few voters, it should not be able to link the voting information with the unique identity of any voter.

\subsubsection{Efficiency}
\label{sssec:efficiency}

Since the audit process must necessarily involve manual checks, it is infeasible to do this for the millions of entries typically found in an electoral roll. Thus,  to audit an electoral roll consisting of $N$ entries, the audit should require at most $O(\log(N))$ manual checks.

\subsubsection{Bare-handedness}
\label{sssec:bare-handedness}

Finally, for the solution to be usable in public elections, eligibility of a voter must be verifiable using traditional approaches such as identity documents, and should not require voters to carry secret keys or own trusted computing devices. The lack of this requirement leads to disenfranchisement of voters who lose or forget their access cards or their details, have to submit them to a coercer or simply are not digitally literate enough to safekeep such secrets. Also, solutions that rely on a third-party service to store these secrets are undesirable as they pose the risk that the service may cast a vote on behalf of the voters. 

\section{Formalisation}
\label{sec:formalisation}

\subsection{Notation}
	We let $[n]$ denote the set $\{1,\dots,n\}$. We let $\multiset{\cdot}$ denote a \emph{multiset}, i.e., an extension of sets where members may appear multiple times (e.g., $\multiset{a,a,b}$ is a multiset). Given a multiset $M$, $\mult{M}{x}$ denotes the multiplicity of $x$ in $M$ ($\mult{M}{x}=0$ if $x \not\in M$) and $|M|$ denotes $\sum_{x \in M}\mult{M}{x}$. Given two multisets $M_1$ and $M_2$, $M_1 \setminus M_2$ denotes the multiset consisting of each element $x \in M_1$ with multiplicity $\max(\mult{M_1}{x} - \mult{M_2}{x},0)$. We let $\ve{x}$ (boldface) denote a list of values and $x_i$ denote its $i^{\text{th}}$ component. We let $\bot$ denote an error or dummy output. We use $\cdot$ when we do not care about the value taken by a variable.
	
	We let $\mathsf{co}$, $\so{P_1}{\mathsf{so}_1}$, $\dots$, $\so{P_m}{\mathsf{so}_m}$ $\leftarrow$ $\Pi(\mathsf{ci}$, $\so{P_1}{\mathsf{si}_1}$, $\dots$, $\so{P_m}{\mathsf{si}_m})$ denote a multiparty protocol $\Pi$ between parties $P_1,\dots,P_m$ where the common input of each party is $\mathsf{ci}$, $\mathcal{P}_k$'s secret input is $\mathsf{si}_k$, the common output is $\mathsf{co}$ and $\mathcal{P}_k$'s secret output is $\mathsf{so}_k$. In security experiments where some party $P_k$ may be controlled by an adversary $\adv$, we denote this as $P_k^{\adv}$ and drop $P_k$'s secret inputs or outputs.

\subsection{Identity and eligibility}
\label{sec:identity-eligibility}

Our notion of identity is based on individuals'  facial photographs which can be matched either automatically, or even manually. We assume that voters can be publicly identified by matching their live facial photographs with that in the electoral roll, and that faking a live photograph is hard. Our notion of eligibility is based on individuals' extrinsic eligibility data such as address, age, citizenship, and legally mandated eligibility criteria for the election. We assume that it is publicly verifiable -- using standard government issued credentials -- whether a voter identified by (manual) face matching is eligible based on the credentials presented by the voter. We also assume that each potential voter, whether eligible or not, can be assigned a unique and deduplicated identifier based on a Sybil-resistant primary identity system. 

To operationalise these assumptions, we define a set of oracles representing some special physical processes and manual checks. Let $\Voters$ denote the set of all potential voters.
\begin{itemize}[leftmargin=*]
	\item $p \leftarrow \gcapture(V, \env)$: This oracle models the process of capturing a facial photograph $p \in \{0,1\}^*$ of a voter $V \in \Voters$. 
	The argument $\env$ represents the physical environment in which the photograph is taken and models the voter's presence in the given environment and their consent for the photograph to be taken. Physically, this oracle could be implemented by capturing a photograph that shows the voter holding an environment-specific placard. Additionally, to prevent against fake photographs created using modern Deepfake technology, the photograph must be certified by either a trusted hardware module or by multiple mutually adversarial authorities physically present in the environment, e.g., polling agents in a polling booth. We say that $p$ \emph{captures} $V$ if $p=\gcapture(V,\env)$ for some environment $\env$.
	
	\item $0/1 \leftarrow \glive(p,\env)$: This oracle models the process of verifying liveness of a voter in environment $\env$ using photograph $p$. This would entail manually inspecting the photograph and verifying the signatures of the trusted hardware module or the authorities.
	
	\item $0/1 \leftarrow \gmatch(p,p')$: This oracle models the process of verifying whether two different photographs $p$ and $p'$ capture the same voter or not. Physically, this oracle can be implemented by either matching the two photographs completely manually or with the help of an automatic face matching software.
	
	\item $0/1 \leftarrow \gelg(p, \ed)$: This oracle models the process of $a)$ identifying the voter $V \in \Voters$ such that $p = \gcapture(V,\env)$ for some environment $\env$, $b)$ verifying whether $V$ possesses eligibility data $\ed \in \{0,1\}^*$, and $c)$ verifying whether $\ed$ renders $V$ eligible to vote in the election. Physically, this oracle may be implemented by verifying voters' government ID cards, cross-checking with existing public records, home visits, etc. 
	
	\item $0/1 \leftarrow \guid(p, \uid)$: This oracle models the process of verifying whether or not the voter captured by photograph $p$ is assigned an identifier $\uid$ by the trusted primary identity system. Physically, this oracle could be implemented by requiring the identity system to issue identity cards containing photographs and $\uid$s to all potential voters (all citizens), and verifying whether $p$ matches the photograph on an identity card containing the claimed $\uid$. The identity system must have provisions to re-issue identity cards to anyone who forgets their $\uid$. 
\end{itemize}

Our security guarantees hold only if the physical processes represented by the oracles satisfy Assumptions \ref{assumption:CP}-\ref{assumption:CI}.

\begin{assumption}[Consistent photographs] 
	\label{assumption:CP}
	For all photographs $p$ and $p'$ capturing the same voter $V$, it holds that $\gmatch(p, p')=1$. 
\end{assumption}

\begin{assumption}[No identical twins]
	\label{assumption:NT} 
	For all photographs $p,p'$ capturing different voters $V,V' \in \Voters$, it holds that $\gmatch(p, p')=0$.
\end{assumption}

\begin{assumption}[Unforgeable photographs]
	\label{assumption:UP}
	Let $p$ be a photograph capturing a voter $V$ such that $\glive(p,\env)=1$ for some environment $\env$. For any environment $\env' \neq \env$, if $\gcapture(V,\env')$ was never called then it is hard to produce a photograph $p'$ such that $\gmatch(p$, $p')$ $=$ $1$ and $\glive(p',\env')=1$. 
\end{assumption}

\begin{assumption}[Consistent eligibility]
	\label{assumption:CE} 
	For all $p, p',\ed \in \{0,1\}^*$, if $\gmatch(p, p')=1$, it holds that $\gelg(p, \ed) = \gelg(p', \ed)$. Further, if $p$ does not capture any $V \in \Voters$, $\gelg(p,\ed)=0$ for any $\ed$.
\end{assumption}

\begin{assumption}[Deduplicated identifiers]
	\label{assumption:NDI} 
	For any two photographs $p, p'$ and identifiers $\uid, \uid'$ such that $\gmatch(p$, $p')$ $=$ $1$ and $\guid(p$, $\uid)$ $=$ $\guid(p'$, $\uid')$ $=$ $1$, it holds that $\uid = \uid'$.
\end{assumption}

\begin{assumption}[Consistent identifiers]
	\label{assumption:CI} 
	For any two photographs $p, p'$ such that $\gmatch(p$, $p')$ $=$ $1$ and any identifier $\uid$, $\guid(p$, $\uid)$ $=$ $\guid(p'$, $\uid)$.
\end{assumption}

Assumptions \ref{assumption:CP} and \ref{assumption:NT} are essential for reliable public identification of voters via their photographs. Assumption \ref{assumption:UP} models our notion of unforgeability of voters' live photographs (with the necessary certificates) in an environment where the voter was not present and for which they did not give explicit consent. 
Although modern Deepfake technology could potentially create fake voter images, forging signatures of a trusted hardware module or multiple independent physical observers remains challenging enough to make this assumption reasonable in practice. Further, there are additional liveness detection techniques \citep{faceflashing,facecloseup,facelive} that are hard for an AI system to fool.

Assumption \ref{assumption:CE} models consistency of eligiblity over different photographs captured by the same voter and is essential for reliable public verification of eligibility. Assumption \ref{assumption:NDI} models that no voter can be given more than one identifier by the primary identity system and Assumption \ref{assumption:CI} models that the identifiers assigned to individuals should be permanent. Note though that we do not assume that multiple voters are not given the same identifier, as our audits can detect this.

\subsection{Protocol structure}
\label{sec:protocol-structure}

We propose a modular structure for electoral roll protocols that relies on an underlying E2E-V voting protocol $\pietoev$ and clearly separates the eligibility concerns of electoral rolls with the tally correctness concerns of E2E-V voting. Specifically, we assume that voters obtain an encrypted vote $\ev$ from the vote-casting protocol $\pietoev.\cast$ of $\pietoev$, along with a NIZK proof $\rho_{\ev}$ that $\ev$ encrypts a valid vote. Our electoral roll protocol ensures that the list of published $\ev$'s corresponds to those cast by eligible voters, but we rely on $\pietoev$ to guarantee that each $\ev$ correctly encrypts voters' intended votes and that the list of published $\ev$'s is processed correctly to produce the final tally (see Figure \ref{fig:scope}). This modularity allows our protocol to work with any of the existing E2E-V voting protocols such as \citep{scratchandvote,pretavoter,punchscan}. \emph{Note:} We let $\valid(\rho_{\ev}, \ev)=1$ denote the verification of NIZK $\rho_{\ev}$ to verify that $\ev$ is a valid vote. We also assume that $\pietoev$ allows creating encryptions of dummy votes indistinguishable from encryptions of valid votes.

\begin{figure}[h]
	\centering
	\includegraphics[width=\linewidth]{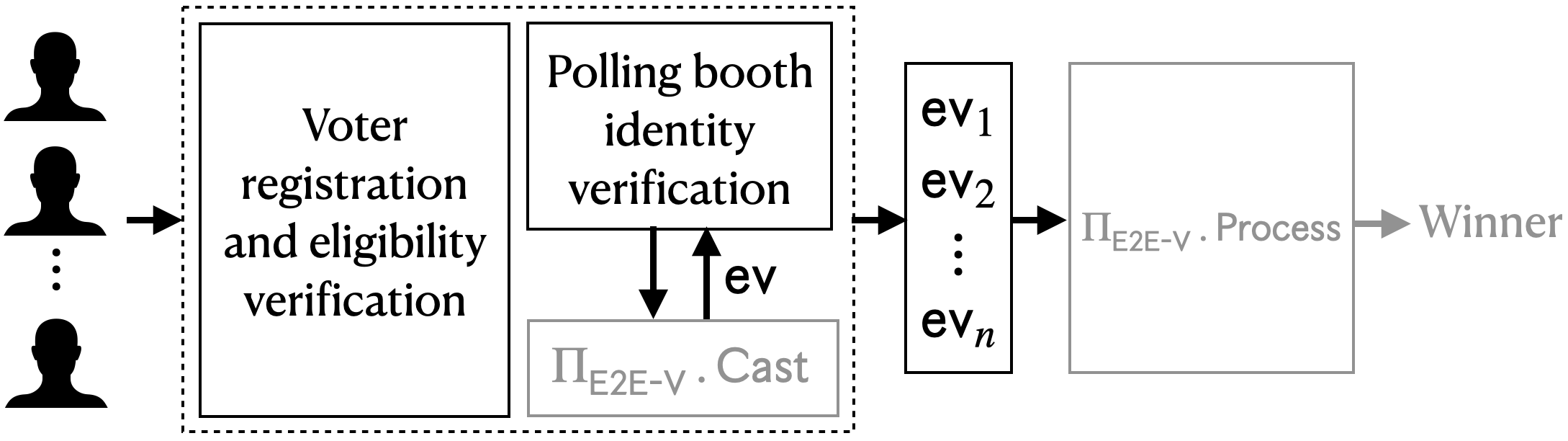}
	\caption{\small{Relationship between our electoral roll protocol and traditional E2E-V voting protocols.}}
	\label{fig:scope}
\end{figure}

In our electoral roll protocol, a set of potential (eligible or not) voters $\Voters$ each acting bare-handedly, and computationally equipped ROs $(R_m)_{m \in [\mu]}$, EOs $(E_j)_{j \in [\jmath]}$, POs $(P_l)_{l \in [\ell]}$, backend servers $(S_k)_{k \in [\kappa]}$, and an auditor $A$ interact with each other via a tuple of sub-protocols $(\setup$, $\register$, $\prepareRoll$, $\cast$, $\indrraudit$, $\indcraudit$, $\univaudit)$, where:
\begin{itemize}[leftmargin=*]
	\item $\pk_S$, $(\so{S_k}{\sk_{S_k}})_{k \in [\kappa]}$ $\leftarrow$ $\setup(1^{\lambda}$, $(\si{S_k}{})_{k \in [\kappa]})$ is a setup protocol among the backend servers. It takes as input the security parameter $\lambda$ in unary and produces a backend public key $\pk_S$ as common output and secret keys for each backend server as their secret output. $\pk_S$ is available as common input to all subsequent sub-protocols.

	\item $\RER, \so{V}{\vid, \rrec, \rvalid}$ $\leftarrow$ $\register(\pk_S$, $\RER$, $\si{V}{\uid, \ed, j}$, $\si{R_m}{}$, $(\si{S_k}{\sk_{S_k}})_{k \in [\kappa]})$ is a voter registration protocol between a voter $V \in \Voters$, an RO $R_m$, and the backend servers. The common input is a list $\RER$ of \emph{encrypted registration requests} made so far (empty initially). $V$'s secret input is its unique identifier $\uid$ obtained from the primary identity system, eligibility data $\ed$ and their residential block number $j$ (this is required to assign the correct EO to verify $V$'s eligibility). The common output is an updated $\RER$ list containing $V$'s request. $V$ obtains a voting identifier $\vid$ to be used during vote casting, a registration receipt $\rrec$, and a bit $\rvalid$ denoting whether $\rrec$ is valid or not.

	\item $\ER,\rho_{\ER}$ $\leftarrow$ $\prepareRoll(\pk_S$, $\RER$, $(\si{S_k}{\sk_{S_k}})_{k \in [\kappa]}$, $(\si{E_j}{})_{j \in [\jmath]})$ is an electoral roll preparation protocol between the backend servers and the EOs, to be executed after the registration phase. The common input is the list $\RER$ of all registration requests. Each $E_j$ approves or dismisses registration requests from voters registering in block $j$. The common output is the prepared electoral roll $\ER$ and a proof of its correctness $\rho_{\ER}$.

	\item $\CI$, $\ve{\ev}$, $\so{V}{\crec,\cvalid}$ $\leftarrow$ $\cast(\pk_S$, $\ER$, $\CI$, $\ve{\ev}$, $\si{V}{\vid, \ev, \rho_{\ev}}$, $\si{P_l}{}$, $(\si{S_k}{\sk_{S_k}})_{k \in [\kappa]})$ is a vote casting protocol between a voter $V \in \Voters$, a PO $P_l$ and the backend servers. The common input is the current state of the electoral roll $\ER$, and lists $\CI$ and $\ve{\ev}$ respectively containing auxiliary casting information and encrypted votes published so far. $V$'s secret input is their voting identity $\vid$, and their encrypted vote $\ev$ and proof of validity $\rho_{\ev}$ obtained from $\pietoev.\cast$. The protocol output is the updated casting information $\CI$ and the updated list of encrypted votes $\ve{\ev}$. $V$'s secret output is their cast receipt $\crec$ and a bit $\cvalid$ denoting whether $\crec$ is valid or not.

	\item $\ver \leftarrow \indrraudit(\pk_S, \RER$, $\si{V}{\rrec, \ed, j}$, $(\si{S_k}{\sk_{S_k}})_{k \in [\kappa]}$, $\si{A}{})$ is an individual registration audit protocol between a voter $V$, auditor $A$ and the backend servers. The common input is the list $\RER$ of registration requests. $V$'s secret input is its registration receipt $\rrec$ and eligibility data $\ed$ and block number $j$ supplied during registration. The output is $A$'s verdict $\ver \in \{0,1\}$ denoting whether $V$ was unfairly denied during registration.
	
	\item $\ver \leftarrow \indcraudit(\pk_S$, $\ER$, $\CI$, $\ve{\ev}$, $\si{V}{\crec, \rrec}$, $(\si{S_k}{\sk_{S_k}})_{k \in [\kappa]}$, $\si{A}{})$ is an individual cast audit protocol among the same parties as $\indrraudit$. The common input is the electoral roll $\ER$, published casting information and encrypted votes $\CI$ and $\ve{\ev}$; $V$'s secret input is its cast receipt $\crec$ and registration receipt $\rrec$. The output is $A$'s verdict $\ver \in \{0,1\}$ denoting whether $V$ was unfairly denied during vote casting. 
	
	\item $\ver \leftarrow \univaudit(\pk_S$, $\alpha$, $\RER$, $\ER$, $\rho_{\ER}$, $\CI$, $\ve{\ev}$, $(\si{S_k}{\sk_{S_k}})_{k \in [\kappa]}$, $\si{A}{})$ is a universal eligibility audit protocol between the backend servers and auditor $A$. The common input is all information published so far and a parameter $\alpha$ controlling the number of verifications. The output is $A$'s verdict $\ver \in \{0,1\}$ denoting whether any large-scale vote stuffing has happened.
\end{itemize}

We assume that all communications between the POs, EOs and the backend servers happen through securely authenticated and encrypted channels. Further, we also assume the existence of a \emph{public bulletin board} \citep{BB0k} for authenticated broadcast among all the parties.

\subsection{Security requirements}

\subsubsection{Completeness} For brevity, we skip a formal description of this requirement. Informally, completeness requires that if all ROs, EOs, POs and backend servers are honest, then audits should pass even if voters are malicious.

\subsubsection{Soundness}

Our soundness requirement (Definition \ref{def:soundness}) captures that even if all ROs, EOs, POs and $S_k$s are corrupted, the audits catch any attempt of large-scale voter denial or vote stuffing. Preventing voter denial entails thwarting attempts to deny honest eligible voters their right to register or vote and attempts to modify or remove votes cast by them. Preventing vote stuffing entails preventing vote casting by ineligible voters, multiple vote casting by a single eligible voter and vote injection against absentee eligible voters. Here, we cannot assume that an eligible voter would act honestly, but we wish to guarantee that they cannot cast more than one vote. 

We model these requirements in experiment $\mathsf{Exp}_{\mathsf{soundness}}$ (Figure \ref{fig:exp-soundness}), parametrised by $n_{\d}$, $\alpha_{\d}$, $f_{\d}$ representing the number of \emph{honest} eligible voters participating in vote casting, number of receipts audited and number of voters denied respectively; and $N_{\s}$, $n_{\s}$, $\alpha_{\s}$, and $f_{\s}$ representing the number of registered eligible (honest or dishonest) voters, number of eligible voters casting their vote, number of ERR/ER entries audited and the number of votes stuffed respectively. The soundness requirement is that the probability of either $f_{\d}$ denials or $f_{\s}$ votes stuffed is less than $\epsilon(n_{\d}$, $\alpha_{\d}$, $f_{\d}$, $N_{\s}$, $n_{\s}$, $\alpha_{\s}$, $f_{\s})$.

The experiment starts by $\adv$ supplying an index set $I_{\h}$ representing the honest voters and initialisation of index sets $I_{\rh}$, $I_{\rdh}$, $I_{\ch}$ and $I_{\cdh}$ representing honest and dishonest voters participating in registration and vote casting respectively. $\EVh$ represents $\ev$'s that should be recorded for the honest voters in the ideal world, modelled as a multiset to allow $\pietoev.\cast$ to issue the same $\ev$ to two different voters (possibly to later clash their votes). We then initialise registration and casting-stage input-outputs for each $(V_i)_{i \in I_{\h}}$.

$\adv$ is then allowed to call oracles $\oreg$ and $\ocast$ to model registration and vote casting of honest voters, where it controls all ROs, EOs, POs and the backend servers, and produce all public information on the bulletin board, with $\EV$ denoting the published encrypted votes (lines 4-5). $\adv$ can also call oracles $\glive,\gmatch,\gelg,\guid$ to verify or query any voter's eligibility or identity information. However, $\adv$ cannot call the $\gcapture$ oracle, which models our central notion of liveness of a voter in an environment using photographs. If $\adv$ gets access to $\gcapture$, it could claim any voter's presence in the polling booth and cast a spurious vote. Nevertheless, $\adv$ obtains a related $\mathsf{OCapture}$ oracle (see below). 

The $\oreg$ and $\ocast$ oracles specify steps to be performed by honest voters to prevent denial for themselves. Voters should register and cast only once, and use the correct $\uid$s and $\vid$s (lines 20 and 28). Also, both during registration and vote casting, voters must perform some manual checks to detect if the receipt issued to them is valid. There must be public processes to prevent the non-issuance of any receipt to the voter. All voters who follow these steps should be registered and allowed to cast, as indicated by sets $I_{\rh}$ and $I_{\ch}$ (lines 21 and 29). Such honest voters' supplied $\ev$'s are added to $\EVh$. $\adv$ wins if $\alpha_{\d}$ random voters' registration receipts and $\alpha_{\d}$ voters' cast receipts are audited successfully by auditor $A$ (lines 6-9) but $|\EVh \setminus \EV| > f_{\d}$, i.e., the number of honest voter $\ev$'s that should have been recorded but were not, either due to denial during registration or casting or due to manipulation of cast votes during publishing, is more than $f_{\d}$.

Prevention against vote stuffing is modelled as follows. First, $\adv$ obtains access to the registration receipt and all information during $\oreg$ and $\ocast$. This models that receipt outputs do not act as eligiblity credentials and do not enable $\adv$ to stuff a vote. Further, with the $\mathsf{OCapture}$ oracle, $\adv$ can capture photographs of honest voters in non-official environments and of dishonest voters in even the casting and registration environments $\envC$ and $\envR$. If a dishonest voter is eligible, though, they are allowed to cast one vote. To capture this, eligible dishonest voters are added to sets $I_{\rdh}$ and $I_{\cdh}$. $\adv$ wins if the universal audit with parameter $\alpha_{\s}$ passes but $|\EV| - |I_{\ch} \cup I_{\cdh}| > f_{\s}$, i.e., the number of published encrypted votes is $f_{\s}$ more than the number of eligible (honest or dishonest) voters participating in vote casting. 

\begin{figure}
	\scalebox{0.8}{
		\begin{tabular}{ @{}rl@{} } 
			1 & $\underline{\mathsf{Exp}_{\mathsf{soundness}}^{\adv,\gset}(1^{\lambda}, n_{\d}, \alpha_{\d}, f_{\d}, N_{\s}, n_{\s}, \alpha_{\s}, f_{\s})}:$ \\
			2 & $I_{\h} \leftarrow \adv(1^{\lambda})$; $I_{\rh}, I_{\rdh}, I_{\ch}, I_{\cdh} \leftarrow \{\}$; $\EVh \leftarrow \multiset{}$ \\
			3 & \textbf{for} $i \in I_{\h}$\textbf{:} $\vc{\uid}{i}^{\r},\vc{\ed}{i}^{\r},\vc{j}{i}^{\r},\vc{\vid}{i}^{\r},\vc{\rrec}{i}^{\r},\vc{\vid}{i}^{\c},\vc{\ev}{i}^{\c},\vc{\crec}{i}^{\c} \leftarrow \bot$ \\
			4 & $\pk_S, \RER, \ER, \rho_{\ER}, \CI, \ve{\ev} \leftarrow \adv^{\mathsf{OReg},\mathsf{OCast},\mathsf{OCapture},\glive,\gmatch,\gelg,\guid}()$  \\
			5 & $\EV := \{\!\!\{\vc{\ev}{i} \mid \vc{\ev}{i} \neq \bot \}\!\!\}$ \\
			6 & \textbf{for} $\alpha_{\d}$ randomly selected $i \in I_{\rh}$\textbf{:} \\
			7 & \quad $\ver_{\rrec_i^{\r}} \leftarrow \indrraudit(\pk_S, \RER, \si{V_i}{\rrec_i^{\r},\ed_i^{\r},j_i^{\r}}, (S_k^{\adv})_{k \in [\kappa]}, \si{A}{})$ \\ 
			8 & \textbf{for} $\alpha_{\d}$ randomly selected $i \in I_{\ch}$\textbf{:} \\
			9 & \quad $\ver_{\crec_i^{\c}} \leftarrow \indcraudit(\pk_S, \ER, \CI, \ve{\ev}, \si{V_i}{\crec_i^{\c}, \rrec_i^{\r}}, (S_k^{\adv})_{k \in [\kappa]}, \si{A}{})$ \\
			10 & $\mathsf{ver}_{\mathsf{UA}} \leftarrow \univaudit(\pk_S, \alpha_{\s}, \RER, \ER, \rho_{\ER}, \CI, \ve{\ev}, (S_k^{\adv})_{k \in [\kappa]}, \si{A}{})$ \\
			11 & $\ver = 1$ iff $(\forall i: \mathsf{ver}_{\rrec_i^{\r}} = 1) \wedge (\forall i: \mathsf{ver}_{\crec_i^{\c}} = 1) \wedge \mathsf{ver}_{\UA} = 1$ \\
			12 & \textbf{return} $1$ \textbf{if} $\ver = 1$ \textbf{and} \\
			13 & \quad ($|I_{\ch}|=n_{\d}$ \textbf{and} $|\EVh \setminus \EV| > f_{\d}$) \textbf{or} \\
			14 & \quad ($|I_{\rh}|+|I_{\rdh}|=N_{\s}$ \textbf{and} $|I_{\ch}|+|I_{\cdh}|=n_{\s}$ \textbf{and} $|\EV| - |I_{\ch} \cup I_{\cdh}| > f_{\s}$) \\
			15 & \\
			16 & $\mathsf{OReg}(i \in I_{\h}, \uid, \ed, m, j, \RER):$ \\
			17 & \quad $\RER, \so{V_i}{\vid, \rrec, \rvalid} \leftarrow \register(\pk_S, \RER, \si{V_i}{\uid, \mathsf{ed}, j}, R_m^{\adv},$ \\
			18 & \quad \quad $(S_k^{\adv})_{k \in [\kappa]})$ \\
			19 & \quad $p \leftarrow \gcapture(V_i, \envR)$ \\
			20 & \quad \textbf{assert} $i \not\in I_{\rh}$ \textbf{and} $\guid(p, \uid)=1$ \textbf{and} $\rvalid = 1$ \\
			21 & \quad $I_{\rh} := I_{\rh} \cup \{ i \}; \vc{\uid}{i}^{\r} := \uid$; $\vc{\ed}{i}^{\r} := \ed$; $\vc{j}{i}^{\r} := j$; $\vc{\vid}{i}^{\r} := \vid$; $\vc{\rrec}{i}^{\r} := \rrec$ \\
			22 & \quad \textbf{return $\RER, \vid, \rrec, \rvalid$} \\
			23 & \\
			24 & $\mathsf{OCast}(i \in I_{\h}, \vid, \ev, \rho_{\ev}, l, \ER, \ve{\ev}):$ \\
			25 & \quad $\CI$, $\ve{\ev}$, $\so{V_i}{\crec, \cvalid} \leftarrow \cast(\pk_S, \ER, \CI, \ve{\ev}, \si{V_i}{\vid, \ev, \rho_{\ev}}, P_{l}^{\adv},$ \\
			26 & \quad \quad $(S_k^{\adv})_{k \in [\kappa]})$ \\
			27 & \quad $p \leftarrow \gcapture(V_i, \envC)$ \\
			28 & \quad \textbf{assert} $i \in I_{\rh} \wedge \vid = \vid_i^{\r} \wedge \gelg(p,\ed_i^{\r}) = 1 \wedge i \not\in I_{\ch} \wedge \cvalid = 1$ \\
			29 & \quad $\EVh := \EVh \cup \{\!\!\{ \ev \}\!\!\}$; $I_{\ch} := I_{\ch} \cup \{ i \}$ \\
			30 & \quad $\vc{\vid}{i}^{\c} := \vid$; $\vc{\ev}{i}^{\c} := \ev$; $\vc{\crec}{i}^{\c} := \crec$ \\
			31 & \quad \textbf{return $\CI, \ev, \crec, \cvalid$} \\
			32 & \\
			33 & $\mathsf{OCapture}(i, \env):$ \\
			34 & \quad \textbf{if} $i \in I_{\h}$\textbf{:} \\
			35 & \quad \quad \textbf{assert} $\env \not\in \{\envR, \envC\}$; $p \leftarrow \gcapture(V_i, \env)$ \\
			36 & \quad \textbf{else}\textbf{:} \\
			37 & \quad \quad $p \leftarrow \gcapture(V_i, \env)$ \\
			38 & \quad \quad \textbf{if} $(\exists \ed: \gelg(p,\ed)=1)$\textbf{:} \\
			39 & \quad \quad \quad \textbf{if} $\env = \envR$\textbf{:} $I_{\rdh} := I_{\rdh} \cup \{ i \} $ \\
			40 & \quad \quad \quad \textbf{if} $\env = \envC$\textbf{:} $I_{\cdh} := I_{\cdh} \cup \{ i \} $ \\
			41 & \quad \quad \textbf{return} $p$ 
		\end{tabular}
	}
	\caption{Soundness experiment}
	\label{fig:exp-soundness}
\end{figure}  

\begin{definition}[Soundness]
	\label{def:soundness}
	Given a function $\epsilon : \Nat^7 \rightarrow [0,1]$, an electoral roll protocol $(\setup$, $\register$, $\prepareRoll$, $\cast$, $\indrraudit$, $\indcraudit$, $\univaudit)$ is called \emph{$\epsilon$-sound} if for all parameters $\lambda$, $n_{\d}$, $\alpha_{\d}$, $f_{\d}$, $N_{\s}$, $n_{\s}$, $\alpha_{\s}$, $f_{\s}$ $\in$ $\Nat$, all oracles $\gset = \{\gcapture$, $\glive$, $\gmatch$, $\gelg$, $\guid\}$ satisfying Assumptions \ref{assumption:CP}-\ref{assumption:CI}, and all PPT adversaries $\adv$, the following holds, where $\mathsf{Exp}_{\mathsf{soundness}}^{\adv,\gset}$ is as defined in Figure \ref{fig:exp-soundness}:
	\begin{align*}
		\begin{split}
			& \Pr[\mathsf{Exp}_{\mathsf{soundness}}^{\adv,\gset}(1^{\lambda}, n_{\d}, \alpha_{\d}, f_{\d}, N_{\s}, n_{\s}, \alpha_{\s}, f_{\s}) = 1] \leq \\
			& \quad \epsilon(n_{\d}, \alpha_{\d}, f_{\d}, N_{\s}, n_{\s}, \alpha_{\s}, f_{\s}),
		\end{split}
	\end{align*}
\end{definition}

\subsubsection{Privacy}
\label{sec:privacy}

Definition \ref{def:privacy} models our privacy definition designed to protect against the voter profiling, linking and forced abstention attacks described in \S\ref{ssec:threat-model}. We assume that POs and EOs are trusted for this definition because they necessarily learn voters' identity and eligibility information to verify it. Similarly, ROs are trusted because they directly obtain identity information from bare-handed voters. These assumptions are further justified as attempts by POs, EOs or ROs to deviate from the protocol risk failing the audits and potential punitive actions. We also assume that at least one backend server is honest. This assumption is standard for secrecy in voting protocols.

Our requirements essentially capture that a protocol where a small $\alpha$ out of $n$ random ERR/ER entries and $\alpha$ random receipts are audited for providing soundness should only provide a small $\delta(n,\alpha)$ advantage to the adversary in learning about voters' eligibility data, participation information and linkage between $\uid$ and $\vid$. This is a reasonable guarantee because leaking a few random voters' identity, eligibility and participation information is not very sensitive when taken individually and leads to profiling and microtargeting attacks only when it is available in bulk for all the voters.

We also wish that privacy concerns do not deter voters from verifying their receipts. However, if more than $\alpha$ voters verify their receipts, the adversary necessarily learns eligibility and participation information for those voters. Nevertheless, the third requirement ensures that even for these voters, the adversary does not get any additional advantage in linking the $\uid$ and $\vid$ information.

\begin{definition}[Privacy] 
	\label{def:privacy}
	Given a function $\delta : \Nat \times \Nat \rightarrow [0,1]$, an electoral roll protocol $(\setup$, $\register$, $\prepareRoll$, $\cast$, $\indrraudit$, $\indcraudit$, $\univaudit)$ is called \emph{$\delta$-private} if for all security parameters $\lambda \in \Nat$, all oracles $\gset = \{\gcapture$, $\glive$, $\gmatch$, $\gelg$, $\guid\}$ satisfying Assumptions \ref{assumption:CP}-\ref{assumption:CI}, and all PPT adversaries $\adv$ controlling auditor $A$, all-but-one backend servers $(S_k)_{k \neq k^*}$, and all voters unless specified below (but not any $(R_m)_{m \in [\mu]}$, $(P_l)_{l \in [\ell]}$, or $(E_j)_{j \in [\jmath]}$), if $\univaudit$ is called with parameter $\alpha$, and $|\ve{\ev}| \geq n$, then:
	\begin{enumerate}[leftmargin=*]
		\item \textbf{Prevention of voter profiling.} If at least $n$ honest voters participate in $\register$ and $\cast$ and at most $\alpha$ random honest voters participate in $\indrraudit$ and $\indcraudit$ each, then for any honest voter $V_i$ participating in $\register$ and any $(\uid, \ed_0, \ed_1, j)$, $\adv$'s advantage in distinguishing between a world where $V_{i}$ registers with $(\uid, \ed_0, j)$ and a world where $V_i$ registers with $(\uid, \ed_1, j)$ is at most negligibly more than $\delta(n,\alpha)$.
		\item \textbf{Prevention of forced abstention attacks.} If at least $n$ honest voters participate in $\register$ and $\cast$ and at most $\alpha$ random honest voters participate in $\indrraudit$ and $\indcraudit$ each, then for any two honest voters $V_{i_0}$ and $V_{i_1}$ participating in $\register$, $\adv$'s advantage in distinguishing between a world where $V_{i_0}$ votes and $V_{i_1}$ abstains and a world where $V_{i_1}$ votes and $V_{i_0}$ abstains is at most negligibly more than $\delta(n,\alpha)$.
		\item \textbf{Unlinkability of identity and voting information.} For any honest voter $V_{i}$ participating in $\register$ and any $(\uid_0$, $\uid_1$, $\ed_0$, $\ed_1$, $j_0$, $j_1)$, $\adv$'s advantage in distinguishing between a world where $V_{i}$ registers with $(\uid_{\tau}$, $\ed_{\tau}$, $j_{\tau})$ and a world where $V_{i}$ registers with $(\uid_{\tau}$, $\ed_{1-\tau}$, $j_{1-\tau})$, where $\tau \in \{0,1\}$ is randomly chosen, is at most negligibly more than $\delta(n,\alpha)$. 
	\end{enumerate}
\end{definition}

\section{Preliminaries}
\label{sec:preliminaries}

\emph{Threshold encryption.} An IND-CPA secure $(\kappa,\kappa)$-threshold encryption scheme $\mathsf{E}^{\text{th}}:=(\mathsf{Keygen}$, $\mathsf{Enc}$, $\thsd)$ with message and ciphertext spaces $\mathsf{M}(\ths)$ and $\mathsf{C}(\ths)$, where:
\begin{itemize}[leftmargin=*]
	\item $\pk_S, (\so{S_k}{\sk_{S_k}})_{k \in [\kappa]} \leftarrow \thsg(1^{\lambda}, (\si{S_k}{})_{k \in [\kappa]})$ is a protocol that generates a common public key $\pk_S$ and a secret key $\sk_{S_k}$ for each $S_k$.
	\item $c \leftarrow \thse(\pk_S, m)$ is an algorithm to encrypt a message $m$ under public key $\pk_S$ to produce a ciphertext $c$.
	\item $(m, \rho)$ $\leftarrow$ $\thsd(c$, $(\si{S_k}{\sk_{S_k}})_{k \in [\kappa]})$ is a threshold decryption protocol where each $S_k$ partially (and verifiably) decrypts $c$ and combines their partial decryptions to obtain the final decryption $m$ and a proof of correctness $\rho$. We let $\so{P}{m, \rho} \leftarrow \thsd(\pk_S, c$, $(\si{S_k}{\sk_{S_k}})_{k \in [\kappa]}$, $\si{P}{})$ denote a variant of this where each $S_k$ sends their partial decryptions (along with proofs of correctness of shares) directly to another party $P$ via a secret channel. $P$ thus obtains the decryption $m$ and proof $\rho$ but $S_k$'s do not. 
\end{itemize}
IND-CPA security of such schemes is defined analogously to that of standard public-key encryption schemes, under the condition that the adversary does not corrupt all $\kappa$ decryptors.

We require $\ths$ to also support a \emph{re-encryption} operation $c' \leftarrow \thsre(\pk_S, c)$ that takes as input $\pk_S$, $c = \thse(\pk_S, m)$ and produces $c'$ indistinguishable from a fresh encryption of $m$. We also require that $\ths$ supports $a)$ a NIZK proof $\mathsf{NIZK}_{\mathsf{enc}}(\pk_S, m, c)$ for proving knowledge of a secret message $m$ encrypted to a public ciphertext $c$ under public key $\pk_S$; and $b)$ a NIZK proof $\mathsf{NIZK}_{\mathsf{eq}}(\pk_S, c_1, c_2)$ for proving that ciphertexts $c_1$ and $c_2$ encrypt the same secret message under public key $\pk_S$. Specifically, we choose the threshold El Gamal encryption scheme \citep{threshold-cryptography}, which supports all these requirements efficiently with standard techniques.

\emph{Shuffles and proofs of shuffle.} We let $\ve{c}'$, $\rho_{\mathsf{shuf}}$ $\leftarrow$ $\mathsf{Shuffle}(\pk_S$, $\ve{c}$, $(\si{S_k}{})_{k \in {\kappa}})$ denote a shuffle protocol. Each $S_k$ in $(S_1,\dots,S_{\kappa})$ successively takes as input a list $\ve{c}$ of $\ths$ ciphertexts under $\pk_S$, creates a re-encrypted and permuted list $\ve{c'}$, where $c'_i:=\thsre(\pk_S, c_{\pi_k(i)} )$ under a fresh permutation $\pi_k$, and forwards $\ve{c}:=\ve{c}'$, along with a proof of correctness $\rho_{\mathsf{shuf},k}$, to $S_{k+1}$. The output of the last party $S_{\kappa}$ is the output ciphertext list $\ve{c}'$. The output proof $\rho_{\mathsf{shuf}}$ is simply the set $\{ \rho_{\mathsf{shuf},k} \}_{k \in [\kappa]}$. Proving that multiple ciphertext lists $\ve{c}_1,\dots,\ve{c_m}$ are obtained correctly by shuffling ciphertext lists $\ve{c}_1',\dots,\ve{c}_m'$  respectively by \emph{the same secret permutation $\pi$} shared across $(S_k)_{k \in {\kappa}}$ can be done using the \emph{commitment-consistent proof-of-shuffle techniques} of \citep{commitment-consistent-proof-shuffle,terelius-restricted-shuffles}. 

\emph{Additional conventions.} We club multiple invocations of a protocol into the same call wherever possible, e.g., $(c_1$, $c_2)$ $\leftarrow$ $\thse(\pk_S$, $(m_1$, $m_2))$ abbreviates $c_1$ $\leftarrow$ $\thse(\pk_S$, $m_1)$; $c_2$ $\leftarrow$ $\thse(\pk_S$, $m_2)$, and $\ve{c}$ $\leftarrow$ $\thse(\pk_S$, $\ve{m})$ abbreviates encryption of each plaintext in list $\ve{m}$ to obtain a list of ciphertexts $\ve{c}$. We let $\oplus$ denote the bitwise XOR operation. We let $H_{\alpha}$ denote a hash function that takes as input a set $S$ and outputs a random subset $S' \subseteq S$ of size $\alpha$. Practically, this can be implemented by taking the output of a standard hash function as the random tape of a randomised algorithm that performs such sampling.

\section{Our protocol}
\label{sec:protocol}

\begin{figure}[t]
    \centering
    \scalebox{0.7}{
        \begin{tabular}{@{\hskip 0.0cm}l@{\hskip 0.02cm} c@{\hskip 0.02cm} l@{\hskip 0.02cm} c@{\hskip 0.0cm}}
			\begin{tabular}[t]{|c|}
                \hline
                \cellcolor{gray!20}{\small $\ERR$} \\
                \hline
                $\uid_1, c_{\vid_{1}}, c_{j_{1}}, c_{\ed_{1}}, c_{\bidR_1}$ \\
                \hline
                $\uid_2, c_{\vid_{2}}, c_{j_{2}}, c_{\ed_{2}}, c_{\bidR_2}$ \\
                \hline
                $\uid_3, c_{\vid_{3}}, c_{j_{3}}, c_{\ed_{3}}, c_{\bidR_3}$ \\
                \hline
                $\uid_4, c_{\vid_{4}}, c_{j_{4}}, c_{\ed_{4}}, c_{\bidR_4}$ \\
                \hline
                $\uid_5, c_{\vid_{5}}, c_{j_{5}}, c_{\ed_{5}}, c_{\bidR_5}$ \\
                \hline
            \end{tabular} & 
            \begin{tabular}[t]{c|c|c}
				\multicolumn{3}{c}{} \\
                \multicolumn{3}{c}{Mixnet} \\
                \cline{2-2}
                $\rightarrow$ & \multirow{5}{*}{$\pi$} & $\rightarrow$\\
                $\vdots$ & & $\vdots$ \\
                $\vdots$ & & $\vdots$\\
                $\rightarrow$ & & $\rightarrow$\\
                \cline{2-2}
            \end{tabular} &
            \begin{tabular}[t]{|c@{\hskip 0.0cm}|c@{\hskip 0.0cm}|c|}
                \hline
                {\cellcolor{gray!20}{\small $\ER$}} & {\cellcolor{gray!20}{\small $\CI$}} & {\cellcolor{gray!20}{\small $\ve{\ev}$}} \\
                \hline
                $\vid_1',j_1',c'_{\ed_1},c'_{\bidR_1},\rresp_1,l_1$ & $c_{\bidC_1},c_{\rho_{\ev_1}}$ & $\ev_1$ \\
                \hline
                $\vid_2',j_2',c'_{\ed_2},c'_{\bidR_2},\rresp_2,l_2$ & $c_{\bidC_2},c_{\rho_{\ev_2}}$ & $\ev_2$ \\
                \hline
                $\vid_3',j_3',c'_{\ed_3},c'_{\bidR_3},\rresp_3,l_3$ & $c_{\bidC_3},c_{\rho_{\ev_3}}$ & $\ev_3$ \\
                \hline
                $\vid_4',j_4',c'_{\ed_4},c'_{\bidR_4},\rresp_4,l_4$ & $c_{\bidC_4},c_{\rho_{\ev_4}}$ & $\ev_4$ \\
                \hline
                $\vid_5',j_5',c'_{\ed_5},c'_{\bidR_5},\rresp_5,l_5$ & $c_{\bidC_5},c_{\rho_{\ev_5}}$ & $\ev_5$ \\
				\hline
            \end{tabular} 
        \end{tabular}
    }
    \caption{Outputs published in our protocol.}
    \label{fig:protocol-overview}
\end{figure}

\subsection{High-level overview}

We first give a high-level overview of our protocol. During setup, the backend servers generate a public key $\pk_S$ under a threshold encryption scheme $\ths$ such that the secret key is shared among the servers. All our encryptions are under $\pk_S$.

For registration, each voter interacts with an RO $R_m$ in a registration booth $m$, supplying their unique identifier $\uid$ obtained from the primary identity system, their eligibility data $\ed$ for the election and the residential block $j$ under which they wish to register. $R_m$ issues an election-specific unlinkable identifier $\vid$ to the voter in a registration receipt and uploads encryptions $c_{\vid}$, $c_j$ and $c_\ed$ of $\vid$, $j$ and $\ed$ respectively to a public bulletin board. $R_m$ also captures a live photograph $\bidR$ of the voter in environment $\envR$ and uploads its encryption $c_{\bidR}$ too as authorisation for $c_{\vid}$, $c_j$ and $c_\ed$. These encryptions together represent encrypted registration requests accepted by $R_m$ and are uploaded as list $\ERR$ on a public bulletin board (see Figure \ref{fig:protocol-overview}), whereas the voters are issued registration acceptance receipts. Voters who, as per $R_m$, were trying to register twice\footnote{Note that we present a simplified static model here, where voters supply their eligibililty data only once. We briefly discuss a dynamic model in \S\ref{sec:ai-dynamic}.} or register using a wrong $\uid$ are given rejection receipts. A public process ensures that no voter is denied without issuing a receipt.

The $\ERR$ entries are processed by backend servers $(S_k)_{k \in [\kappa]}$ in a mixnet fashion, provably shuffling lists $\ve{c_{\vid}}$, $\ve{c_{j}}$, $\ve{c_{\ed}}$ and $\ve{c_{\bidR}}$ under a shared secret permutation $\pi$ to obtain lists $\ve{c'_{\vid}}$, $\ve{c'_{j}}$, $\ve{c'_{\ed}}$ and $\ve{c'_{\bidR}}$ respectively. Subsequently, $\ve{c'_{\vid}}$ and $\ve{c'_{j}}$ are provably threshold-decrypted to produce lists $\ve{\vid'}$ and $\ve{j'}$ respectively. Each EO $E_j$ obtains decryptions of $c_{\ed_i}'$ and $c_{\bidR_i}'$ for each $\vid'_i$ entry marked with block identifier $j$, uses this data to verify the voter's eligibility and uploads their decision $\rresp_i \in \{0,1\}$ against $\vid'_i$. $E_j$ also uploads the booth identifier $l$ assigned to the voter and communicates it to the voter. This list, containing voters' eligibility data against their $\vid$ identities unlinkable to their $\uid$ identities, along with EOs' approvals, is called $\ER$ or the \emph{electoral roll}. At this point, a random audit of the electoral roll could be done to verify EOs' approvals, and voters could request audit of their registration receipts and file complains. 

During vote casting, voters accepted during registration visit the polling booth $l$ assigned to them, carrying the $\vid$ in their registration receipt. The PO $P_l$ obtains the voter's registered photograph $\bidR'$ decrypted from $c'_{\bidR}$ uploaded against $\vid$. The backend servers allow $P_l$ to obtain these decryptions only for $\vid$ entries marked with booth $l$. $P_l$ allows the voter to cast their vote if and only if $\bidR'$ matches the live voter's photograph, EO's decision $\rresp_i$ for $\vid$ is $1$ and the voter has not voted already. If allowed, the voter obtains an encrypted vote $\ev$ and a proof of validity $\rho_{\ev}$ by following the underlying $\pietoev.\cast$ protocol. $P_l$ scans both $\ev$ and $\rho_{\ev}$ and uploads $\ev$ and an encryption $c_{\rho_{\ev}}$ of $\rho_{\ev}$ as casting information $\CI$ against $\vid$. $P_l$ also uploads an encryption $c_{\bidC}$ of a live photograph $\bidC$ of the voter in environment $\envC$ as authorisation for $\ev$ and $c_{\rho_{\ev}}$. 

Note that $\rho_{\ev}$ is encrypted to hide who voted. For the same reason, $P_l$ also uploads dummy encrypted votes $\ev_{\bot}$ and encryptions $c_{\rho_{\ev}}$ of dummy proofs for all $\vid$s in booth $l$ corresponding to absentee voters. Further, if $\pietoev.\process$ can remove dummy votes anonymously (say, in a mixnet-based backend \citep{pretavoter,mixnet-sok}) then $\ve{\ev}$ are directly fed to it, otherwise (say, in a homomorphic tallying-based backend \citep{scratchandvote,starvote}) $(S_k)_{k \in [\kappa]}$ first anonymise $(\ve{\ev},\ve{c_{\rho_{\ev}}})$ through a separate mixnet, obtaining $\ev$ and decryptions of $\ve{c_{\rho_{\ev}}}$ at the end of the mixnet, and then feed valid $\ev$'s from this output to $\pietoev.\process$. 

As with registration, accepted voters are given cast acceptance receipts and voters who, as per $P_l$, were deemed ineligible by the EO or trying to vote twice or against someone else's $\vid$ are issued rejection receipts.

Vote stuffing is prevented by decrypting a few randomly selected $\ERR$ and $\ER$ entries and verifying the live photographs and eligibility of voters. Multiple entries in $\ER$ by the same voter are prevented by a zero-knowledge proof of deduplicated $\vid$s given deduplicated $\uid$s from Assumption \ref{assumption:NDI}. Eligibility of voters is established by the $\ER$ audit checking that voters' registered photographs and eligibility data indeed makes them eligible as per $\gelg$. Ballot stuffing against absentee eligible voters is established by the $\ER$ audit checking if the decrypted casting stage photographs match the corresponding registered photographs. 

Prevention of voter denials requires overcoming many conflicting requirements. First, bare-handed voters cannot easily detect and contest attempts to deny them by computationally equipped adversaries. Thus, part of the verification must be done during a post-facto receipt audit by a computationally equipped auditor, but this should not compromise voters' privacy to the auditor. However, proofs about our identity and eligibility notions inherently rely on manual checks, which necessitate decryption of encrypted information and cannot be given via cryptographic zero-knowledge proofs. Further, there is the threat of malicious voters trying to register or cast against other voters' entries, knowing they will be denied, but hoping to learn information about them. We carefully design our $\register$ and $\cast$ protocols to tackle these challenges.

Our soundness guarantee comes from the fact that each vote stuffing attempt leads to a distinct ``bad'' $\ERR/\ER$ entry that would be caught if audited and that each voter denial attempt leads to a bad registration or cast receipt. Thus, verification of a few random $\ERR/\ER$ entries or a few random receipts detects any large-scale vote stuffing or voter denial attacks. Our privacy guarantee mainly comes from the fact that only a few random $\ERR/\ER$ entries are opened.

Figures \ref{fig:protocol-register}, \ref{fig:protocol-prepareER} and \ref{fig:protocol-cast} show the registration, $\ER$ preparation and vote casting phases of our protocol in detail. In \S\ref{sec:registration}, \ref{sec:prepareER} and \ref{sec:casting}, we highlight how vote stuffing, voter denial and privacy attacks are prevented in each of these phases (see \S\ref{sec:analysis} for complete formal proofs). In \S\ref{sec:ai}, we discuss some additional practical issues.

\subsection{Registration (Figure \ref{fig:protocol-register})}
\label{sec:registration}

\begin{figure}[t]
	\scalebox{0.8}{
		\begin{tabular}[t]{@{}l@{}}
			\hline
			$\setup(1^{\lambda}, (\si{S_k}{})_{k \in [\kappa]})$: \\
			\hline
			\textbf{output} $\pk_S, (\so{S_k}{\sk_{S_k}})_{k \in [\kappa]} \leftarrow \thsg(1^{\lambda}, (\si{S_k}{})_{k \in [\kappa]})$ \\
			\hline
			\\
			\hline
			$\register(\pk_S, \ERR = (\ve{\uid}, \cdot, \cdot, \cdot, \ve{c_{\bidR}}), \si{V}{\uid, \ed, j}, \si{R_m}{}, (\si{S_k}{\sk_{S_k}})_{k \in [\kappa]})$: \\
			\hline
			$V \rightarrow R_m$: \enskip $\uid, \ed, j$  \\
			$V,R_m$: \enskip $\bidpreR \leftarrow \gcapture(V, \envpreR)$  \\

			$R_m$: \enskip \textbf{if} $\guid(\bidpreR, \uid) \neq 1$\textbf{:} \quad \enskip \mycomment{Rejected - unrecognised or incorrect $\uid$}\\
			\quad \quad \enskip \quad $R_m \rightarrow V$: \enskip $\rrec:=(\reject, \uid, \bidpreR)$ \\
			\quad \quad \enskip \quad $V$: \enskip $\rvalid:=1$ iff $\rrec=(\reject, \uid, \bidpreR)$; $\vid:=\bot$ \\
			
			$R_m$: \enskip \textbf{else if} $\exists i: \uid_i=\uid$\textbf{:} \quad \quad \quad \enskip \ \mycomment{Rejected - duplicate registration}\\
			\quad \quad \enskip \quad $\so{R_m}{\bidR_i, \cdot} \leftarrow \thsd(\pk_S, c_{\bidR_i}, (\si{S_k}{\sk_{S_k}})_{k \in [\kappa]}, \si{R_m}{})$ \\
			\quad \quad \enskip \quad \textbf{assert} $\gmatch(\bidR_i, \bidpreR)=1$ \\  
			\quad \quad \enskip \quad $R_m \rightarrow V$: $\rrec:=(\alreadyreg, \bidpreR, \bidR_i)$ \\
			\quad \quad \enskip \quad $V$: \enskip $\rvalid:=1$ iff $\rrec=(\alreadyreg, \bidpreR, \cdot)$; $\vid:=\bot$ \\
			
			$R_m$: \enskip \textbf{else:} \quad \quad \quad \quad \quad \quad \quad \quad \quad \quad \quad \quad \quad \quad \mycomment{Accepted for registration} \\
			\quad \quad \enskip \quad $V,R_m$: \enskip $\bidR \leftarrow \gcapture(V, \envR)$ \\
			\quad \quad \enskip \quad $\vid \xleftarrow{\$} \mathsf{M}(\ths)$ \\
			\quad \quad \enskip \quad $(c_{\vid}, c_{j}, c_{\ed}, c_{\bidR}) \leftarrow \thse(\pk_S, (\vid, j, \ed, \bidR))$ \\
			\quad \quad \enskip \quad $\ERR.\mathsf{append}((\uid, c_{\vid}, c_j, c_{\ed}, c_{\bidR}))$ \\
			\quad \quad \enskip \quad $(c_{\vid}^*,c_{j}^*,c_{\ed}^*) \leftarrow \thse(\pk_S, (\vid, j, \ed))$ \\
			\quad \quad \enskip \quad $\rho_{\mathsf{enc},1} \leftarrow \nizk_{\mathsf{enc}}(\pk_S, \bidR, c_{\bidR})$ \\
			\quad \quad \enskip \quad $\rho_{\mathsf{enc},2} \leftarrow \nizk_{\mathsf{enc}}(\pk_S, (\vid, j, \ed), (c_{\vid}^*, c_{j}^*, c_{\ed}^*))$ \\
			\quad \quad \enskip \quad $\rho_{\mathsf{eq}} \leftarrow \nizk_{\mathsf{eq}}(\pk_S, ((c_{\vid}, c_{\vid}^*), (c_j,c_j^*), (c_{\ed},c_{\ed}^*)))$ \\
			\quad \quad \enskip \quad $\rho_{\mathsf{eq},1} \xleftarrow{\$} \{0,1\}^{|\rho_{\mathsf{eq}}|}; \rho_{\mathsf{eq},2} \leftarrow \rho_{\mathsf{eq},1} \oplus \rho_{\mathsf{eq}}$ \\
			\quad \quad \enskip \quad $\rrec_{11}:=(\uid, \bidR, \rho_{\mathsf{enc},1}), \rrec_{12}:=(c_{\vid}, c_j, c_{\ed}, c_{\bidR}, \rho_{\mathsf{eq},1})$ \\
			\quad \quad \enskip \quad $\rrec_{21}:=(\vid, j, \ed, \rho_{\mathsf{enc},2}), \rrec_{22}:=(c_{\vid}^*, c_j^*, c_{\ed}^*, \rho_{\mathsf{eq},2})$ \\
			\quad \quad \enskip \quad $R_m \rightarrow V$: \enskip $\rrec:= (\accept, \rrec_{11}, \rrec_{12}, \rrec_{21}, \rrec_{22})$ \\
			\quad \quad \enskip \quad $V$: \enskip $\rvalid := 1$ iff $\rrec = (\accept, (\uid, \bidR, \cdot), \cdot, (\vid, j, \ed, \cdot), \cdot)$ \\
			\textbf{output} $\ERR = (\ve{\uid}, \ve{c_{\vid}}, \ve{c_{j}}, \ve{c_{\ed}}, \ve{c_{\bidR}}), \so{V}{\vid, \rrec, \rvalid}$ \\
			\hline

			\\

			\hline
			$\indrraudit(\mathsf{pk}_S, \ERR, \si{V}{\rrec, \ed, j}, (\si{S_k}{\mathsf{sk}_{S_k}})_{k \in [\kappa]}, \si{A}{})$: \\
			\hline
			$V$: \enskip \textbf{if} $\rrec = (\reject, \tilde{\uid}, \tilde{\bidpreR})$\textbf{:} \\
			\quad \quad \quad $V \rightarrow A$: \enskip $\rrec$ \\
			\quad \quad \quad $A$: \enskip $\ver_{\rrec} := (\guid(\tilde{\bidpreR}, \tilde{\uid}) \neq 1)$ \\

			$V$: \enskip \textbf{else if} $\rrec = (\alreadyreg, \tilde{\bidpreR}, \tilde{\bidR})$\textbf{:} \\
			\quad \quad \quad $V \rightarrow A$: \enskip $\rrec$ \\
			\quad \quad \quad $A$: \enskip $\ver_{\rrec} :=  (\glive(\tilde{\bidR}, \envR) = \gmatch(\tilde{\bidR}, \tilde{\bidpreR}) = 1)$ \\

			$V$: \enskip \textbf{else if} $\rrec = (\accept, \rrec_{11} = (\tilde{\uid}, \tilde{\bidR}, \tilde{\rho}_{\mathsf{enc},1}), \rrec_{12} = (\tilde{c}_{\vid}, \tilde{c}_{j}, \tilde{c}_{\ed}, \tilde{c}_{\bidR},$ \\
			\quad \enskip \ $\tilde{\rho}_{\mathsf{eq},1}), \rrec_{21} = (\tilde{\vid}, \tilde{j}, \tilde{\ed}, \tilde{\rho}_{\mathsf{enc},2}), \rrec_{22} = (\tilde{c}_{\vid}^*, \tilde{c}_j^*, \tilde{c}_{\ed}^*, \tilde{\rho}_{\mathsf{eq},2}))$\textbf{:} \\
			\quad \quad \quad $V$: \enskip $\beta \xleftarrow{\$} \{0,1,2\}$ \\
			\quad \quad \quad  $V \rightarrow A$: \enskip $(\rrec_{11},\rrec_{12})$ if $\beta=0$; $(\rrec_{21},\rrec_{22})$ if $\beta=1$; $(\rrec_{12},\rrec_{22})$ o/w \\
			\quad \quad \quad  $A$: \enskip $\ver_{\rrec} \leftarrow (\nizkver(\tilde{\rho}_{\mathsf{enc},1}, (\pk_S, \tilde{\bidR}, \tilde{c}_{\bidR})) \wedge$ \\
			\quad \quad \quad \quad \enskip \quad $\exists i: \ERR_i = (\tilde{\uid}, \tilde{c}_{\vid}, \tilde{c}_j, \tilde{c}_{\ed}, \tilde{c}_{\bidR}))$ if $\beta=0$; \\
			\quad \quad \quad  \quad \enskip $\nizkver(\tilde{\rho}_{\mathsf{enc},2}, (\pk_S, (\tilde{\vid}, \tilde{j}, \tilde{\ed}),(\tilde{c}_{\vid}^*,\tilde{c}_{j}^*, \tilde{c}_{\ed}^*)))$ if $\beta=1$;\\
			\quad \quad \quad  \quad \enskip $\nizkver(\tilde{\rho}_{\mathsf{eq},1} \oplus \tilde{\rho}_{\mathsf{eq},2}, (\pk_S, ((\tilde{c}_{\vid}, \tilde{c}_{\vid}^*), (\tilde{c}_{j},\tilde{c}_{j}^*), (\tilde{c}_{\ed}, \tilde{c}_{\ed}^*))))$ o/w \\

			\textbf{output} $\ver_{\rrec}$ \\
			\hline

			\\

			\hline
			$\univaudit(\mathsf{pk}_S, \alpha, \ERR=(\ve{\uid}, \cdot, \cdot, \cdot, \ve{c_{\bidR}}), \dots, (\si{S_k}{\mathsf{sk}_{S_k}})_{k \in [\kappa]},\si{A}{})$: \\
			\hline

			\textbf{assert} $\forall i,j\in |\ERR|, i\neq j: \uid_i \neq \uid_j$ \\
			$I_{\ERR\text{-}\mathsf{audit}} \leftarrow H_{\alpha}([|\ERR|])$ \\
			\textbf{for $i \in I_{\ERR\text{-}\mathsf{audit}}$:} \\
			\quad $\so{A}{\bidR_i, \cdot} \leftarrow \thsd(\pk_S, c_{\bidR_i}, (\si{S_k}{\sk_{S_k}})_{k \in [\kappa]}, \si{A}{})$ \\
			\quad $\ver_{\ERR_i} := (\guid(\bidR_i, \uid) = 1)$ \\
			$\ver_{\ERR} \leftarrow (\bigwedge_{i \in I_{\ERR\text{-}\mathsf{audit}}} \ver_{\ERR_i})$ \\
			$\dots$ \mycomment{contd. to Figures \ref{fig:protocol-prepareER} and \ref{fig:protocol-cast}} \\
			\hline
		\end{tabular} 
	}
	\caption{The $\setup$ protocol, $\register$ protocol, $\indrraudit$ protocol and $\ERR$ audit steps of the $\univaudit$ protocol}
	\label{fig:protocol-register}
\end{figure}

\subsubsection{Preventing denial via $\reject$ registration receipts}
\label{sec:registration-denial-reject}

Eligible voters may be denied from even registering by malicious ROs giving them a $\reject$ registration receipt, citing that they were trying to register using someone else's $\uid$. To prevent this, we require the RO to capture a live photograph $\bidpreR$ of the voter, use this photograph to verify if the voter indeed owns the presented $\uid$ and print $\uid$ and $\bidpreR$ on the $\reject$ receipt. The voter verifies the correctness of $\uid$ and the just-captured $\bidpreR$, whereas the receipt audit verifies whether the identifier corresponding to $\bidpreR$ indeed mismatches $\uid$ as per the primary identity system. It is important that $\bidpreR$ is captured in a special ``pre-registration'' environment $\envpreR$ distinct from the official registration environment $\envR$ used for actual registration to prevent denial with a claim of duplicate registration - see below.

\subsubsection{Preventing denial via $\alreadyreg$ registration receipts}
\label{sec:registration-denial-alreadyreg}

Eligible voters may also be denied by ROs issuing them a $\alreadyreg$ receipt and wrongly claiming that they have already registered before. Malicious ROs may try to do this by uploading bogus registration information to $\ERR$ before the voter's arrival in the registration booth. To prevent this, we require the RO to print the decryption $\bidR_i$ of $c_{\bidR_i}$ published against the $\uid$ from the claimed previous registration in the $\alreadyreg$ receipt, along with a live photograph of the voter in environment $\envpreR$. The voter verifies $\bidpreR$ and the receipt audit verifies whether $\bidR_i$ matches $\bidpreR$ and is indeed in environment $\envR$. Since voters do not give their $\envR$ photographs until they come for registration, malicious ROs cannot produce such $\bidR_i$s for them by Assumption \ref{assumption:UP}. Also note that malicious ROs may even try to deny registration to voters coming for registration and surreptitiously use their $\envR$ photograph instead for this purpose. This is prevented because voters do not accept a $\reject$ or $\alreadyreg$ receipt after giving a $\envR$ photograph.

\subsubsection{Preventing denial via $\accept$ registration receipts}
\label{sec:registration-denial-accept}

Even voters given an $\accept$ receipt could be denied if no registration request was actually uploaded to $\ERR$, or if the uploaded encryptions $c_{\bidR}$, $c_{\ed}$ and $c_{j}$ do not correctly encrypt voters' supplied information, resulting in dismissal of the voter's registration request by $E_j$. Further, if the $\vid$ on the voter's receipt does not match the $c_{\vid}$ uploaded against their $\uid$, they would be denied during vote casting as it would either make $P_l$ unable to find $\vid$ on $\ER$ or fetch some other voter's registered photograph. The problem is that a bare-handed voter cannot detect and contest against such incorrectly computed encryptions, and putting proofs of their correctness directly in the voter receipt leaks the voter's complete identity information to the auditor. 

To address this, we design a special cut-and-choose protocol. We require an $\accept$ registration receipt to consist of four components physically arranged as follows and separated from each other via perforations at the dashed lines: 
\begin{equation*}
\begin{tabular}{|l:l|} \hline $\rrec_{11} := {\color{blue}\uid}, {\color{blue}\bidR}, \rho_{\mathsf{enc},1}$ & $\rrec_{12} := c_{\vid}, c_j, c_{\ed}, c_{\bidR},\rho_{\mathsf{eq},1}$ \\ \hdashline  $\rrec_{21} := \vid, {\color{blue}j}, {\color{blue}\ed}, \rho_{\mathsf{enc},2}$ & $\rrec_{22} := c^*_{\vid}, c^*_j, c^*_{\ed}, \rho_{\mathsf{eq},2}$ \\ \hline \end{tabular} 
\end{equation*}
Here, $\rho_{\mathsf{enc},1}$ is a NIZK proof that $c_{\bidR}$ encrypts $\bidR$; $\rho_{\mathsf{enc},2}$ is a NIZK proof that $c^*_{\vid}$, $c^*_j$, and $c^*_{\ed}$ are encryptions of $\vid$, $j$ and $\ed$ (different from $c_{\vid}$, $c_j$ and $c_{\ed}$) respectively; and $\rho_{\mathsf{eq},1}$ and $\rho_{\mathsf{eq},2}$ are $(2,2)$-secret shares of a NIZK proof $\rho_{\mathsf{eq}}$ proving that $(c^*_{\vid}, c^*_j, c^*_{\ed})$ encrypt the same values as $(c_{\vid},c_j,c_{\ed})$. The voter is only required to verify the correctness of the plaintext $\uid$, their just-captured photograph $\bidR$, and the supplied information $j$ and $\ed$ printed on the receipt in a human-readable form (shown in {\color{blue}blue} above). For the receipt audit, the voter needs to randomly perform one of the following three steps: $1)$ tear off the receipt along the horizontal perforation line and give the top half to the auditor, $2)$ tear it off as above and give the bottom half to the auditor, and $3)$ tear it off along the vertical perforation line and give the right half to the auditor. If the uploaded encryptions $(c_{\vid}, c_j, c_{\ed}, c_{\bidR})$ are incorrect, one of the three verifications must fail. Thus, the approach statistically detects a large number of such frauds, where each fraud is detected with probability 1/3 on audit, but the auditor cannot link the $\uid$ and $\vid$ information for any voter. Note that voters can take a copy of the bottom left half of the registration receipt for vote casting.

\subsubsection{Preventing voter stuffing}
\label{sec:registration-stuffing}

Note that anyone can request to register for the election; the actual eligibility of voters is established by auditing EO approvals during the $\ER$ audit (see \S\ref{sec:casting-stuffing}). Thus, the only voter stuffing attack within the scope of the $\ERR$ universal audit is multiple registrations by a single voter. This is prevented by checking for the audited $\ERR$ entries that $c_{\bidR}$ encrypts a photograph of $\uid$'s owner and that each $\uid$ is distinct. By Assumption \ref{assumption:NDI}, this prevents a single voter's photographs from appearing in multiple $\ERR$ entries.

\subsubsection{Privacy issues}
\label{sec:registration-privacy}

Given our assumption of an honest $R_m$ for privacy and the need to protect privacy only for honest eligible voters (see \S\ref{sec:privacy}), we focus on the privacy leak via $\accept$ receipts for honest voters. As shown in \S\ref{sec:registration-denial-accept}, this case allows the auditor to learn only either the $\uid$-related or the $\vid$-related information but not both. 

However, malicious voters may try to extract honest voters' registration information by deliberately going to register with wrong data, knowing they will be denied by honest $R_m$, but hoping to learn extra information from the receipts. To prevent against this, we ensure that a voter can only ever fetch its own information from the $\register$ protocol and not of any other voter. Specifically, in the $\reject$ case, no information about $\uid$'s owner is revealed. An approach that, e.g., prints the decryption of $c_{\bidR}$ uploaded against $\uid$ --- to be matched against $\bidpreR$ during receipt audit --- would allow any voter to fetch any uploaded $\uid$'s registered photograph. In the $\alreadyreg$ case, the voter only obtains the decryption $\bidR_i$ of $c_{\bidR_i}$ after the RO checks the assertion that $\bidR_i$ matches $\bidpreR$. Note that because of this check, our protocol is complete only if $\guid$ is such that no two voters are given the same $\uid$ (our soundness guarantee does not require this assumption and only requires that no voter should hold two $\uid$s - Assumption \ref{assumption:NDI}.)

Finally, the $\ERR$ entries decrypted during $\univaudit$ are selected non-interactively via a hash function. Thus, only a fixed number of random entries are ever decrypted, even when the protocol is invoked multiple times by different auditors. 

\subsection{Preparation of the electoral roll (Figure \ref{fig:protocol-prepareER})}
\label{sec:prepareER}

\begin{figure}[t]
	\scalebox{0.8}{
		\begin{tabular}[t]{@{}l@{}}
			\hline
			$\prepareRoll(\pk_S, \ERR=(\ve{\uid}, \ve{c_{\vid}}, \ve{c_{j}}, \ve{c_{\ed}}, \ve{c_{\bidR}}), \si{S_k}{\sk_{S_k}})_{k \in [\kappa]}$, $((\si{E_j}{})_{j \in [\jmath]})$: \\
			\hline
			$(\ve{c_{\vid}'}, \ve{c_j'}, \ve{c_{\ed}'}, \ve{c_{\bidR}'}), \rho_{\mathsf{shuf}} \leftarrow \mathsf{Shuffle}(\mathsf{pk}_S, (\ve{c_{\vid}}, \ve{c_j}, \ve{c_{\ed}}, \ve{c_{\bidR}}), (\si{S_k}{})_{k \in [\kappa]})$ \\
			$(\ve{\vid'}, \ve{j'}), (\ve{\rho_{\vid'}}, \ve{\rho_{j'}}) \leftarrow \thsd(\mathsf{pk}_S, (\ve{c_{\vid}'}, \ve{c_{j}'}), (\si{S_k}{\sk_{S_k}})_{k \in [\kappa]})$ \\
		    \textbf{for $j \in [\jmath]$:} \\
			\quad \textbf{for $i$ s.t. $\vc{j}{i}' = j$:} \\
			\quad \quad $\so{E_j}{(\vc{\bidR}{i}',\vc{\ed}{i}'), \cdot} \leftarrow \thsd(\mathsf{pk}_S, (\vc{c_{\bidR}'}{i}, \vc{c_{\ed}'}{i}), (\si{S_k}{\sk_{S_k}})_{k \in [\kappa]}, \si{E_j}{})$ \\
			\quad \quad $E_j$: \enskip $\rresp_i \leftarrow \gelg(\bidR_i', \ed_i')$ \\
			\textbf{output} $\ER = (\ve{\vid}', \ve{j}', \ve{c_{\ed}'}, \ve{c_{\bidR}'}, \ve{\rresp}, \ve{\bot})$, $\rho_{\ER} = (\rho_{\mathsf{shuf}}, \ve{\rho_{\vid'}}, \ve{\rho_{j'}}, \ve{c_{\vid}'}, \ve{c_{j}'})$ \\
			\hline
			\\
			\hline
			$\univaudit(\mathsf{pk}_S, \cdot, \ERR = (\ve{\uid}, \ve{c_{\vid}}, \ve{c_j}, \ve{c_{\ed}}, \ve{c_{\bidR}}), \ER = (\ve{\vid'}, \ve{j'}, \ve{c_{\ed}'},\ve{c_{\bidR}'},\cdot, \cdot),$ \\
			\quad $\rho_{\ER} = (\rho_{\mathsf{shuf}}, \ve{\rho_{\vid'}}, \ve{\rho_{j'}}, \ve{c_{\vid}'}, \ve{c_{j}'}), \cdot, \cdot, (\si{S_k}{\mathsf{sk}_{S_k}})_{k \in [\kappa]}, \si{A}{})$: \\
			\hline
			$\dots$ \mycomment{contd. from Figure \ref{fig:protocol-register}} \\
			$A$: \enskip \   $\ver_{\prepareRoll} := (\forall i,j \in [|\ER|],i \neq j: \vid'_i \neq \vid'_j) \wedge$ \\
			\quad \quad \quad $\nizkver(\rho_{\mathsf{shuf}}, (\mathsf{pk}_S, (\ve{c_{\vid}}, \ve{c_j}, \ve{c_{\ed}}, \ve{c_{\bidR}}), (\ve{c'_{\vid}}, \ve{c'_j}, \ve{c'_{\ed}}, \ve{c'_{\bidR}}))) \wedge$ \\
			\quad \quad \quad $\nizkver(\ve{\rho_{\vid'}}, (\pk_S, \ve{\vid'}, \ve{c'_{\vid}})) \wedge \nizkver(\ve{\rho_{j'}}, (\pk_S, \ve{j'}, \ve{c'_{j}}))$ \\
			$\dots$ \mycomment{contd. to Figure \ref{fig:protocol-cast}} \\
			\hline
		\end{tabular} 
	}
	\caption{The $\prepareRoll$ protocol and $\prepareRoll$ audit steps of the $\univaudit$ protocol}
	\label{fig:protocol-prepareER}
\end{figure}

\subsubsection{Preventing voter stuffing} 
\label{sec:prepareER-stuffing}

The proof of shuffle of list $\ve{c_{\bidR}}$ to $\ve{c'_{\bidR}}$ ensures that if no single voter's registration photographs appear in multiple $\ve{c_{\bidR}}$ entries (see \S\ref{sec:registration-stuffing}) then they do not appear in multiple $\ve{c'_{\bidR}}$ entries. This prevents a single voter from owning multiple $\ER$ entries.

\subsubsection{Preventing denial} 
\label{sec:prepareER-denial}

Additionally, proofs of consistent shuffle of lists $\ve{c_{\vid}}$, $\ve{c_{j}}$, $\ve{c_{\ed}}$ and $\ve{c_{\bidR}}$ under the same secret permutation and correct decryption of $\ve{c'_{\vid}}$ and $\ve{c'_{j}}$ ensures that accepted voters' supplied registration information correctly appears in encrypted form against the $\vid$ in their registration receipt. This prevents evaluation of their eligibility on wrong information or denial during casting with a ``wrong $\vid$'' claim.

\subsubsection{Privacy issues} 
\label{sec:prepareER-privacy}

The privacy guarantee of this step is mainly due to the unlinkability of shuffle input and output lists if at least one $S_k$ is honest. Further, the honest $S_{k^*}$ allows each $E_j$ to fetch only entries marked with block number $j$ and such that the decryption shares are directly sent to $E_j$, disallowing any $(S_k)_{k \in [\kappa]}$ from learning the decryptions themselves.

\subsection{Vote casting (Figure \ref{fig:protocol-cast})}
\label{sec:casting}

\begin{figure}[t]
	\scalebox{0.8}{
		\begin{tabular}[t]{@{}l@{}}
			\hline
			$\cast(\pk_S, \ER = (\ve{\vid'}, \ve{j'}, \ve{c_{\ed}'},\ve{c_{\bidR}'},\rresp_i, l_i), \CI = (\ve{c_{\bidC}}, \ve{c_{\rho_{\ev}}}), \ve{\ev},$ \\
			\quad $\si{V}{\vid, \ev, \rho_{\ev}}, \si{P_{l}}{ }, (\si{S_k}{\sk_{S_k}})_{k \in [\kappa]})$: \\
			\hline
			$V \rightarrow P_l$: \enskip $\vid,\ev,\rho_{\ev}$ \quad \quad \quad \quad \quad \quad \quad \mycomment{$\ev,\rho_{\ev}$ obtained from $\pietoev.\cast$}\\
			
			$P_l$: \enskip \textbf{if} $\forall i: \vid'_i \neq \vid$\textbf{:} \quad \quad \quad \quad \quad \quad \quad \quad \quad \mycomment{Rejected - unrecognised $\vid$} \\
			\quad \quad \quad $P_l \rightarrow V$: \enskip $\crec := (\reject, \vid)$ \\
			\quad \quad \quad $V$: \enskip $\cvalid := 1$ iff $\crec := (\reject, \vid)$ \\

			$P_l$: \enskip \textbf{else if} $\exists i: \vid'_i = \vid$\textbf{:} \\
			\quad \quad \quad $V,P_l$: \enskip $\bidpreC \leftarrow \gcapture(V, \envpreC)$ \\
			\quad \quad \quad $\so{P_l}{(\bidR'_i, \ed'_i, \bidC_i, \rho_{\ev_i}), (\rho_{\bidR'_i}, \rho_{\ed'_i}, \cdot, \cdot)} \leftarrow \thsd(\pk_S, (c'_{\bidR_i}, c'_{\ed_i},$ \\
			\quad \quad \quad \quad \quad $c_{\bidC_i}, c_{\rho_{\ev_i}}), (\si{S_k}{\sk_{S_k}})_{k \in [\kappa]}, \si{P_l}{})$ \\
			
			\quad \quad \quad \textbf{if} $\gmatch(\bidR'_i,\bidpreC) \neq 1$\textbf{:} \quad \quad \quad \quad \quad \quad \quad \mycomment{Rejected - incorrect $\vid$} \\
			\quad \quad \quad \quad $P_l \rightarrow V$: \enskip $\crec := (\reject, \vid, \bidpreC)$ \\
			\quad \quad \quad \quad $V$: \enskip $\cvalid := 1$ iff $\crec := (\reject, \vid, \bidpreC)$ \\

			\quad \quad \quad \textbf{else if} $\rresp_i \neq 1$\textbf{:} \quad \quad \quad \quad \quad \quad \quad \quad \mycomment{Rejected - deemed ineligible} \\
			\quad \quad \quad \quad $P_l \rightarrow V$: \enskip $\crec := (\reject, \vid, \bidpreC, \ed'_i, \rho_{\ed'_i})$ \\
			\quad \quad \quad \quad $V$: \enskip $\cvalid := 1$ iff $\crec := (\reject, \vid, \bidpreC, \cdot, \cdot)$ \\
			
			\quad \quad \quad \textbf{else if} $\pietoev.\valid(\rho_{\ev_i}, \ev_i)$\textbf{:} \quad \quad \enskip \mycomment{Rejected - duplicate casting} \\
			\quad \quad \quad \quad $P_l \rightarrow V$: \enskip $\crec := (\alreadycast, \bidpreC, \bidC_i)$ \\
			\quad \quad \quad \quad $V$: \enskip $\cvalid := 1$ iff $\crec := (\alreadycast, \bidpreC, \cdot)$ \\

			\quad \quad \quad \textbf{else:} \quad \quad \quad \quad \quad \quad \quad \quad \quad \quad \quad \quad \quad \quad \quad \quad \mycomment{Accepted for casting} \\
			\quad \quad \quad \quad $V,P_l$: \enskip $\bidC \leftarrow \gcapture(V, \envC)$ \\
			\quad \quad \quad \quad $c_{\bidC} \leftarrow \thse(\pk_S, \bidC)$; $c_{\rho_{\ev}} \leftarrow \thse(\pk_S, \rho_{\ev})$ \\
			\quad \quad \quad \quad $\CI_i := (c_{\bidC},c_{\rho_{\ev}})$; $\ev_i := \ev$ \\
			\quad \quad \quad \quad $P_l \rightarrow V$: \enskip $\crec := (\accept, \vid, \ev)$ \\
			\quad \quad \quad \quad $V$: \enskip $\cvalid := 1$ iff $\crec := (\accept, \vid, \ev)$ \\
			\textbf{output} $\CI, \ve{\ev}, \so{V}{\crec,\cvalid}$, $\so{P_l}{}$ \\
			\mycomment{Post polling, for each $\ER_i$ where $l_i=l$ but no $\ev_i$ exists, $P_l$ uploads} \\
			\mycomment{$\CI_i = (\thse(\pk_S, \bot),\thse(\pk_S, \bot))$ and a fresh dummy $\ev_i$.} \\
			\hline
			\\

			\hline
			$\indcraudit(\mathsf{pk}_S, \ER = (\ve{\vid}', \dots), \CI, \ve{\ev},  \si{V}{\crec, \rrec},(\si{S_k}{\mathsf{sk}_{S_k}})_{k \in [\kappa]}, \si{A}{})$: \\
			\hline
			$V \rightarrow A$: \enskip $\crec$ \\  
			$A$: \  \textbf{if} $\crec = (\reject, \tilde{\vid})$\textbf{:} $\ver_{\crec}:= 1$ iff $\forall i: \vid'_i \neq \tilde{\vid}$ \\

			\quad \enskip \textbf{else if} $\crec = (\reject, \tilde{\vid}, \tilde{\bidpreC})$\textbf{:} \\
			\quad \enskip \quad $V \rightarrow A$: \enskip $(\rrec_{11},\rrec_{21}) = ((\cdot, \bidR, \cdot), (\vid, \cdot, \cdot, \cdot))$ \\ 
			\quad \enskip \quad $\ver_{\crec}:= \vid \neq \tilde{\vid} \vee \gmatch(\tilde{\bidpreC}, \bidR) = 0$ \\

			\quad \enskip \textbf{else if} $\crec = (\reject, \tilde{\vid}, \tilde{\bidpreC}, \tilde{\ed}', \tilde{\rho}_{\ed'})$\textbf{:} \\
			\quad \enskip \quad $\ver_{\crec}:= \exists i: \vid'_i = \tilde{\vid} \wedge \mathsf{NIZKVer}(\tilde{\rho}_{\ed'}, (\pk_S, \tilde{\ed}', c'_{\ed_i})) \wedge$ \\
			\quad \enskip \quad \quad $\gelg(\tilde{\bidpreC}, \tilde{\ed}') = 0$ \\
			
			\quad \enskip \textbf{else if} $\crec = (\alreadycast, \tilde{\bidpreC}, \tilde{\bidC})$\textbf{:} \\
			\quad \enskip \quad $\ver_{\crec}:= \gmatch(\tilde{\bidC}, \tilde{\bidpreC}) = \glive(\tilde{\bidC}, \envC) = 1$ \\
			
			\quad \enskip \textbf{else if} $\crec = (\accept, \tilde{\vid}, \tilde{\ev})$\textbf{:} \\
			\quad \enskip \quad $\ver_{\crec}:= (\exists i: \vid'_i = \tilde{\vid} \wedge \ev_i = \tilde{\ev})$ \\
			\textbf{output} $\ver_{\crec}$ \\
			\hline
			\\

			\hline
			$\univaudit(\mathsf{pk}_S$, $\alpha$, $\RER$, $\ER = (\ve{\vid'}, \ve{j'},\ve{c_{\bidR}'}, \ve{c_{\ed}'},\ve{\rresp}, \cdot)$, $\cdot$, $\CI = (\ve{c_{\bidC}}, \ve{c_{\rho_{\ev}}})$, \\
			\quad $\ve{\ev}$, $(\si{S_k}{\mathsf{sk}_{S_k}})_{k \in [\kappa]}$, $\si{A}{})$: \\
			\hline
			$\ver_{\ERR}, \ver_{\prepareRoll} \leftarrow \dots$ \mycomment{contd. from Figures \ref{fig:protocol-register} and \ref{fig:protocol-prepareER}} \\
			$I_{\ER\text{-}\mathsf{audit}} \leftarrow H_{\alpha}(\{ i \in [|\ER|] \mid \rresp_i = 1 \})$ \\
			\textbf{for $i \in I_{\ER\text{-}\mathsf{audit}}$:} \\
			\quad \textbf{if} $\pietoev.\valid(\rho_{\ev_i},\ev_i)=1$\textbf{:} \\
			\quad \quad $\ver_{\ER_i} := (\gmatch(\bidC_i, \bidR'_i) = \glive(\bidC_i, \envC) = \gelg(\bidR'_i, \ed'_i) = 1)$, \\
			\quad where each $d \in \{\bidR'_i, \bidC_i, \ed'_i, \rho_{\ev_i}\}$ is obtained as follows\textbf{:} \\
			\quad \quad $\so{A}{d, \cdot} \leftarrow \thsd(\pk_S, c_d, \si{S_k}{\sk_{S_k}})_{k \in [\kappa]}, \si{A}{}) $ \\
			$\ver_{\ER} \leftarrow \bigwedge_{i \in I_{\ER\text{-}\mathsf{audit}}} \ver_{\ER_i}$ \\
			\textbf{output} $\ver_{\mathsf{UA}} := \ver_{\ERR} \wedge \ver_{\prepareRoll} \wedge \ver_{\ER}$ \\
			\hline
		\end{tabular}
	}
	\caption{The $\cast$ protocol, $\indcraudit$ protocol and $\ER$ audit steps of the $\univaudit$ protocol}
	\label{fig:protocol-cast}
\end{figure}

\subsubsection{Preventing denial via $\reject$ cast receipts}
\label{sec:casting-denial-reject}

Voters could be denied by malicious POs via a $\reject$ cast receipt with multiple claims: $a)$ they used a $\vid$ not in $\ER$, $b)$ the $\vid$ was assigned to someone else, or $c)$ the voter was deemed ineligible by the EO. In case $a$, the receipt simply contains the $\vid$ used and can be directly audited against $\vid$s published on $\ER$. In case $b$, we require the PO to match the $\bidR'$ decrypted from $c_{\bidR'}$ published against the presented $\vid$ entry, match it against a live photograph $\bidpreC$ of the voter in a ``pre-casting'' environment $\envpreC$, and issue a receipt containing $\bidpreC$ and $\vid$. If the voter is unfairly denied, they can prove ownership of $\vid$ by showing the left half of their registration receipt (see \S\ref{sec:registration-denial-accept}) to the auditor, who verifies that the $\bidR$ in the registration receipt matches $\bidpreC$ in the cast receipt. In case $c$ (the voter is the owner of $\vid$), the PO issues a receipt containing the decrypted eligibility data registered against the presented $\vid$ entry, along with a proof of correct decryption. The receipt can then be audited directly using the $\gelg$ oracle. 

\subsubsection{Preventing denial via $\alreadycast$ cast receipts}
\label{sec:casting-denial-alreadycast}

Voters denied with the claim of duplicate casting can be issued an $\alreadycast$ receipt, similar to the $\alreadyreg$ receipt (see \S\ref{sec:registration-denial-alreadyreg}), containing a live photograph $\bidpreC$ of the voter in environment $\envpreC$ and a photo $\bidC$ in the official casting environment $\envC$ fetched from the claimed previous successful vote cast attempt. The protection mechanisms are similar to \S\ref{sec:registration-denial-alreadyreg}.

\subsubsection{Preventing denial via $\accept$ cast receipts}
\label{sec:casting-denial-accept}

Voters accepted during vote casting can be denied by giving them wrong $\vid$ or $\ev$ in the receipt or by not uploading their desired $\ev$. To prevent this, we require that the $\accept$ receipt should contain $(\vid,\ev)$ such that $\vid$ is affixed on the underlying E2E-V protocol's paper receipt containing $\ev$ by the PO. This is done like so because although bare-handed voters can detect manipulations in $\vid$ if it comes from a reasonably small space, they cannot detect if any modifications are made to $\ev$ by the PO.
Wrongly uploaded $\ev$s are prevented by a few random voters verifying that the $\ev$ on their receipt is correctly uploaded against $\vid$, exactly how traditional E2E-V protocols \citep{pretavoter,scratchandvote,starvote} achieve individual verifiability.

\subsubsection{Preventing voter/vote stuffing} 
\label{sec:casting-stuffing}

Preventing vote stuffing in this step requires counting votes from only eligible voters who came to cast their vote and counting them only once per voter. Votes by ineligible voters are prevented by checking during the universal audit of random $\ER$ entries that if the corresponding $\ev$ is a valid one (by decrypting and verifying the NIZK proof of validity $\rho_{\ev}$), then the voter is eligible as per their registered photograph and supplied eligibility data. Votes against absentee eligible voters are prevented by checking that the casting-stage photograph decrypted from $c_{\bidC}$ matches their registered photograph and was captured in environment $\envC$. Since absentee voters never give their photographs in environment $\envC$ and voters coming for casting do not accept a $\reject$ or $\alreadycast$ receipt after giving a $\envC$ photograph, this check ensures that the correctness of uploaded $\ev$'s is auditable by the voters. Multiple casting attempts by the same voter are prevented since if no voter's registered photograph appears in multiple $\ER$ entries (\S\ref{sec:prepareER-stuffing}), each voter's casting photograph matches at most one $\ER$ entry's registered photograph.

\subsubsection{Privacy issues}
\label{sec:casting-privacy}

As with \S\ref{sec:registration-privacy}, we focus on the case of an $\accept$ receipt for honest voters. The secrecy of this receipt directly follows from the receipt-freeness of the underlying E2E-V scheme. Further, as in registration, no voter obtains decryption of any uploaded $\ER$ entry if their live photograph does not match the registered photograph published against the claimed $\vid$. The universal audit similarly only decrypts a few random entries selected by a hash function.

\subsection{Additional issues}
\label{sec:ai}

\subsubsection{Authentication of the receipts} 
\label{sec:ai-receipt-authentication}

All receipts obviously need to be authenticated with signatures of appropriate authorities to prevent frivolous denial claims by malicious voters. However, one must be careful that this does not introduce new opportunities of denial to honest voters. Specifically, if a receipt is authenticated via a \emph{digital} signature, $R_m$ and $P_l$ can deny voters by deliberately issuing invalid signatures since bare-handed voters cannot verify them but they render the voters' denial claims frivolous. Thus, all receipts must be authenticated using traditional seals-and-stamps verifiable by human voters. This, of course, comes with the unavoidable trade-off of reduced security against false denial claims.

\subsubsection{Space of $\uid$ and $\vid$} 
\label{sec:ai-vid-uid-verification}

The space of $\uid$ and $\vid$ must be small enough so that bare-handed voters can detect modified $\uid/\vid$ printed on their receipts but big enough to assign each voter a unique identifier.

\subsubsection{Forced abstention from registration}
\label{sec:ai-forced-abstention-registration}

Since $\uid$s of voters participating in registration become public, the protocol presented as above does not protect against adversaries forcing voters to abstain from registration. Further, participation in registration for an election may itself leak information about the voter's residential constituency and other attributes. To prevent this, $R_m$ can upload dummy encrypted registration information for all $\uid$s enrolled in the primary identity system whose owner did not show up for registration. This is allowed because of our guarantees against voter stuffing.



\subsubsection{Voters losing or forgetting $\vid$s}

Losing one's $\vid$ to someone else does not enable them to vote on the first voter's behalf because of photo matchings. However, if a voter forgets their $\vid$, they cannot vote. Since $E_j$'s are responsible for enfranchising voters and hold their registered photographs, such voters should fetch their $\vid$ from their $E_j$. 

\subsubsection{Dynamic updates and verifications}
\label{sec:ai-dynamic}

Finally, note that our protocol works in a simplified static model where voters supply registration information only once per election, and the $\ERR$ list is published and verified only once. In practice, though, eligibililty data is updated and verified dynamically throughout the election cycle. To allow voters to update their eligibility data without allowing ROs to maliciously change voters' data, all photographs must show a unique timestamp that the uploaded registration information must also carry. Any photograph should only authorise information marked with the same timestamp. To prevent later modification of entries once-verified, an \emph{append-only bulletin board} \citep{BB0k} can be used \citep{bulletin-board}.


\section{Security analysis}
\label{sec:analysis}

We begin our analysis by first defining a few primitive concepts and proving some basic lemmas. We then state our main soundness claim in Theorem~\ref{thm:soundness}. Our main privacy claim is in Theorem~\ref{thm:privacy}.

\begin{definition}[Mapping between $\ERR$ and $\ER$ entries]
	\label{def:mapping-entries}
	An $\ERR_i$ entry \emph{maps to} an $\ER_{i'}$ entry (equivalently, $\ER_{i'}$ maps from $\ERR_i$) if $a)$ $\vid'_{i'}$ and $j'_{i'}$ are the correct decryptions of $c_{\vid_i}$ and $c_{j_i}$ respectively, and $b)$ the correct decryptions of $c_{\ed_{i'}}'$ and $c_{\bidR_{i'}}'$ equal the correct decryptions of $c_{\ed_i}$ and $c_{\bidR_i}$ respectively.
\end{definition}

\begin{lemma}
	\label{lem:correct-shuffle}
	If $\ver_{\prepareRoll}=1$ then each $\ERR$ entry maps to one and only one $\ER$ entry and each $\ER$ entry maps from one and only one $\ERR$ entry.
\end{lemma}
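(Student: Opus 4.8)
The plan is to read off, from the three NIZK checks bundled into $\ver_{\prepareRoll}=1$, a single permutation that already witnesses the required correspondence, and then to promote this to a \emph{unique} correspondence using the $\vid'$-distinctness check. Write $\dec(c)$ for the unique plaintext underlying a ciphertext $c$ under $\pk_S$. Since the NIZKs are only computationally sound, everything below is understood to hold except with negligible probability in $\lambda$; this error is what gets folded into the main theorem.

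First I would invoke soundness of the commitment-consistent proof of shuffle $\rho_{\mathsf{shuf}}$. As it verifies against the input lists $(\ve{c_{\vid}}, \ve{c_j}, \ve{c_{\ed}}, \ve{c_{\bidR}})$ and the output lists $(\ve{c'_{\vid}}, \ve{c'_j}, \ve{c'_{\ed}}, \ve{c'_{\bidR}})$, it guarantees a \emph{single} permutation $\pi$ (the composition of the per-server permutations) such that, for every output index $i'$, each of $c'_{\vid_{i'}}, c'_{j_{i'}}, c'_{\ed_{i'}}, c'_{\bidR_{i'}}$ is a re-encryption of $c_{\vid_{\pi(i')}}, c_{j_{\pi(i')}}, c_{\ed_{\pi(i')}}, c_{\bidR_{\pi(i')}}$ respectively. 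The crucial point --- and the reason the commitment-consistent technique is needed rather than four independent shuffle proofs --- is that the \emph{same} $\pi$ governs all four columns, so the four plaintexts in row $i'$ of the output all originate from the single input row $\pi(i')$. Since re-encryption preserves plaintexts, this immediately yields $\dec(c'_{\ed_{i'}}) = \dec(c_{\ed_{\pi(i')}})$ and $\dec(c'_{\bidR_{i'}}) = \dec(c_{\bidR_{\pi(i')}})$, which is exactly condition (b) of Definition \ref{def:mapping-entries} for the pair $(\pi(i'), i')$.

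Next I would chain in the two decryption proofs. Soundness of $\ve{\rho_{\vid'}}$ and $\ve{\rho_{j'}}$ gives $\vid'_{i'} = \dec(c'_{\vid_{i'}})$ and $j'_{i'} = \dec(c'_{j_{i'}})$; combining these with the re-encryption equalities above gives $\vid'_{i'} = \dec(c_{\vid_{\pi(i')}})$ and $j'_{i'} = \dec(c_{j_{\pi(i')}})$, which is condition (a). Hence $\ERR_{\pi(i')}$ maps to $\ER_{i'}$ for every $i'$. Because the shuffle preserves the number of entries and $\pi$ is a bijection on the common index set, this already establishes \emph{existence} of a mapping in both directions.

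For uniqueness I would use the first conjunct of $\ver_{\prepareRoll}$, namely $\vid'_{i'} \neq \vid'_{i''}$ whenever $i' \neq i''$. Pushing this distinctness back through the bijection $\pi$ shows the input plaintexts $\dec(c_{\vid_i})$ are pairwise distinct as well. Now, since the mapping relation constrains the $\vid$ field via plaintext equality in condition (a), output-side distinctness forbids one $\ERR$ entry from mapping to two different $\ER$ entries (they would force $\vid'_{i'} = \vid'_{i''}$), while input-side distinctness forbids two $\ERR$ entries from mapping to the same $\ER$ entry (they would force $\dec(c_{\vid_i}) = \dec(c_{\vid_{i''}})$). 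Together with existence this yields the ``one and only one'' claim in both directions. The main obstacle is not the bookkeeping but pinning down precisely what the consistent shuffle proof certifies --- that a \emph{single} $\pi$ links all four columns at once --- and ensuring the bare plaintext-equality mapping cannot be satisfied accidentally by an unrelated entry; the $\vid'$-distinctness check is exactly what closes that gap, so the argument genuinely hinges on combining shuffle soundness with that distinctness rather than on either alone.
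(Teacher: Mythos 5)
Your proposal is correct and follows essentially the same route as the paper: extract a single permutation $\pi$ from the soundness of $\rho_{\mathsf{shuf}}$, combine it with the soundness of the decryption proofs $\ve{\rho_{\vid'}}, \ve{\rho_{j'}}$ to establish that $\ERR_{\pi(i')}$ maps to $\ER_{i'}$, and then use the pairwise distinctness of the $\vid'$s to rule out any entry mapping to (or from) more than one counterpart. The only cosmetic difference is that you derive the ``no two $\ERR$ entries map to one $\ER$ entry'' direction from input-side plaintext distinctness, whereas the paper derives it from the injectivity of $\pi$ combined with the already-established uniqueness in the forward direction; both are sound.
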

\begin{proof}
	Since $\ver_{\prepareRoll}=1$, NIZK $\rho_{\mathsf{shuf}}$ passes (see Figure \ref{fig:protocol-prepareER}). Thus, a permutation $\pi$ can be extracted such that lists $\ve{c_{\vid}'}$, $\ve{c_{j}'}$, $\ve{c_{\ed}'}$ and $\ve{c_{\bidR}'}$ are re-encryptions and permutations of lists $\ve{c_{\vid}}$, $\ve{c_{j}}$, $\ve{c_{\ed}}$ and $\ve{c_{\bidR}}$ respectively under the same permutation $\pi$. Further, NIZKs $\ve{\rho_{\vid'}}$ and $\ve{\rho_{j'}}$ also pass, so $\ve{\vid}'$ and $\ve{j}'$ correctly decrypt $\ve{c_{\vid}'}$ and $\ve{c_{j}'}$ respectively. Thus, each $\ERR_i$ entry maps to $\ER_{\pi(i)}$ and each $\ER_i$ entry maps from $\ERR_{\pi^{-1}(i)}$. If some $\ERR_i$ entry maps to $\ER_{i_1}$ and $\ER_{i_2}$ for two distinct $i_1$ and $i_2$, then both $\vc{\vid}{i_1}'$ and $\vc{\vid}{i_2}'$ must be correct decryptions of $c_{\vid_i}$. By the correctness of the decryption protocol, this implies that $\vc{\vid}{i_1}' = \vc{\vid}{i_2}'$. This is not possible because $\ver_{\prepareRoll}=1$ only if all $\vid$s are distinct. Thus, each $\ERR_i$ entry must map to only $\ER_{\pi(i)}$. This also implies that $\ERR_{i_1}$ and $\ERR_{i_2}$ for two distinct $i_1$ and $i_2$ cannot map to a single $\ER_i$ entry, because no permutation $\pi$ exists such that $\pi(i_1) = \pi(i_2) = i$.
\end{proof}

\begin{fact}
	\label{fact:photo-captures-one}
	Each photograph $p$ captures at most one voter $V \in \Voters$ (directly follows from Assumptions \ref{assumption:CP} and \ref{assumption:NT} by setting $p':=p$).
\end{fact}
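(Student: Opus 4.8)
The plan is to argue by contradiction, exploiting the degenerate instantiation $p' := p$ suggested in the statement. Suppose some photograph $p$ captures two \emph{distinct} voters $V, V' \in \Voters$ with $V \neq V'$; I will derive a contradiction from the self-consistency of $\gmatch$ on the single input pair $(p, p)$.

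First I would invoke Assumption \ref{assumption:CP}. Since $p$ and the identical photograph $p' := p$ both capture the \emph{same} voter $V$, the assumption yields $\gmatch(p, p) = 1$. Next I would invoke Assumption \ref{assumption:NT}: since $p$ captures $V$ and $p' := p$ captures $V'$ where $V \neq V'$ are \emph{different} voters, the assumption yields $\gmatch(p, p) = 0$. These two conclusions are contradictory, since $\gmatch(p, p)$ is a single well-defined bit in $\{0,1\}$. Hence no photograph can capture two distinct voters, i.e., each photograph captures at most one $V \in \Voters$, as claimed.

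The one point meriting care — and the closest thing to an obstacle, though it is really just a hygiene check — is confirming that both assumptions are legitimately applicable when the two photograph arguments coincide. Both Assumption \ref{assumption:CP} and Assumption \ref{assumption:NT} quantify universally over photographs $p$ and $p'$ with no stipulation that $p \neq p'$, so substituting $p' = p$ is permitted; the only substantive hypotheses are which voters the (here identical) photographs capture, and these are exactly supplied by the contradiction hypothesis. Beyond this, the argument is an immediate two-line application of the stated assumptions and requires no further machinery.
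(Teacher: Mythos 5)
Your proof is correct and is exactly the argument the paper intends: instantiating both Assumption~\ref{assumption:CP} and Assumption~\ref{assumption:NT} with $p' := p$ forces $\gmatch(p,p)$ to be both $1$ and $0$ if $p$ captured two distinct voters, which is the contradiction the parenthetical ``by setting $p':=p$'' points to. No gaps; your hygiene check that neither assumption requires $p \neq p'$ is the only subtlety and you handle it correctly.
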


\begin{definition}[Ownership]
	\label{def:ownership}
	Let $V_i \in \Voters$ be a voter s.t. $i \in I_{\h}$ in $\mathsf{Exp}_{\mathsf{soundness}}$ (Figure \ref{fig:exp-soundness}). \textbf{(1)} $V_i$ \emph{owns} the registration receipt $\rrec_i^{\r}$ and the cast receipt $\crec_i^{\c}$. \textbf{(2)} $V_i$ owns an $\ERR_{i^*}$ entry if the decryption $\bidR_{i^*}$ of $c_{\bidR_{i^*}}$ captures $V_i$ and $\guid(\bidR_{i^*}, \uid_{i^*}) = 1$. $V_i$ owns an $\ER_{i^{+}}$ entry if $\ER_{i^{+}}$ maps from an $\ERR_{i^*}$ entry owned by $V_i$.
\end{definition}

\begin{lemma}
	\label{lem:ownership}
	If $\ver_{\prepareRoll}=1$, then \textbf{(1)} each registration receipt, cast receipt, $\ERR$ entry and $\ER$ entry is owned by at most one voter in $V \in (V_{i})_{i \in I_{\h}}$; and \textbf{(2)} each voter in $\Voters$ owns at most one $\ERR$ entry and at most one $\ER$ entry.
\end{lemma}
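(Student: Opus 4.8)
The plan is to treat the two parts separately and, within each, to dispose of the four object types by first handling the receipts (pure bookkeeping), then reducing the $\ER$-entry claims to the corresponding $\ERR$-entry claims via the bijection established in Lemma~\ref{lem:correct-shuffle}. The receipts are the easy case and appear only in part (1): by Definition~\ref{def:ownership}(1) the only voter owning $\rrec_i^{\r}$ (resp.\ $\crec_i^{\c}$) is $V_i$, and since the experiment indexes these receipts by distinct $i \in I_{\h}$, each receipt has a unique owner, so the receipt portion of part (1) is immediate. Hence the real content concerns $\ERR$ and $\ER$ entries.

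For part (1) I would fix an $\ERR_{i^*}$ entry and suppose both $V_i$ and $V_{i'}$ (with $i,i' \in I_{\h}$) own it. By Definition~\ref{def:ownership}(2) the decryption $\bidR_{i^*}$ of $c_{\bidR_{i^*}}$ captures $V_i$ and also captures $V_{i'}$; Fact~\ref{fact:photo-captures-one} then forces $V_i = V_{i'}$, so at most one honest voter owns $\ERR_{i^*}$. For an $\ER_{i^+}$ entry I invoke Lemma~\ref{lem:correct-shuffle}: since $\ver_{\prepareRoll}=1$, $\ER_{i^+}$ maps from exactly one $\ERR$ entry, say $\ERR_{i^*}$, and any owner of $\ER_{i^+}$ is by definition an owner of that single $\ERR_{i^*}$, so uniqueness of owner is inherited from the $\ERR$ case just proved.

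For part (2) the $\ER$-entry claim again reduces to the $\ERR$-entry claim: if a voter $V$ owned two distinct $\ER$ entries, each would map from an $\ERR$ entry owned by $V$, and these two $\ERR$ entries are distinct because the $\ERR \to \ER$ correspondence of Lemma~\ref{lem:correct-shuffle} is a bijection; so it suffices to show $V$ owns at most one $\ERR$ entry. I would prove this by contradiction: suppose $V$ owns distinct $\ERR_{i_1}$ and $\ERR_{i_2}$. Then $\bidR_{i_1}$ and $\bidR_{i_2}$ both capture $V$, so Assumption~\ref{assumption:CP} gives $\gmatch(\bidR_{i_1}, \bidR_{i_2}) = 1$; combined with $\guid(\bidR_{i_1}, \uid_{i_1}) = \guid(\bidR_{i_2}, \uid_{i_2}) = 1$ from the ownership definition, Assumption~\ref{assumption:NDI} yields $\uid_{i_1} = \uid_{i_2}$, contradicting distinctness of the $\uid$'s in $\ERR$.

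The step I expect to be the main obstacle is pinning down exactly why the $\uid$'s of $\ERR$ are distinct, since this is what turns $\uid_{i_1}=\uid_{i_2}$ into $i_1=i_2$. Distinctness is not among the conditions verified by $\ver_{\prepareRoll}$; it is the opening assertion of the $\ERR$-audit branch of $\univaudit$. I would therefore either strengthen the invoked hypothesis to also assume that this assertion passed (the regime $\ver_{\ERR}=1$ in which the lemma is actually used inside the soundness proof), or argue that distinct-$\uid$ is a structural invariant of any $\ERR$ reaching this stage, and I would state this dependence explicitly. Everything else is routine: Fact~\ref{fact:photo-captures-one} closes part (1) for $\ERR$ entries, Assumptions~\ref{assumption:CP} and~\ref{assumption:NDI} give the two-line core of part (2), and the bijection of Lemma~\ref{lem:correct-shuffle} lifts both parts from $\ERR$ to $\ER$.
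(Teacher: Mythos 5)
Your proof is correct and follows essentially the same route as the paper's: receipts are immediate from Definition~\ref{def:ownership}, a shared $\ERR$ entry is ruled out by Fact~\ref{fact:photo-captures-one}, a shared $\ER$ entry is ruled out by the one-to-one mapping of Lemma~\ref{lem:correct-shuffle}, and part (2) is the Assumption~\ref{assumption:CP} + Assumption~\ref{assumption:NDI} argument forcing equal $\uid$s. Your worry about where distinct-$\uid$ comes from is well placed: the paper's proof simply asserts that ``$\ver_{\prepareRoll}=1$ only if $\uid_{i_1^*}\neq\uid_{i_2^*}$,'' but as defined in Figure~\ref{fig:protocol-prepareER} $\ver_{\prepareRoll}$ only checks distinctness of the $\vid$s; the distinct-$\uid$ check is the opening assertion of the $\ERR$-audit branch of $\univaudit$ in Figure~\ref{fig:protocol-register}, so your proposal to state this dependence explicitly (or fold it into the hypothesis) is a legitimate tightening rather than a gap in your argument.
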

\begin{proof}
	\textbf{(1)} Consider two voters $V_{i_1}$ and $V_{i_2}$ for $i_1,i_2 \in I_{\h}$. By Definition \ref{def:ownership}, the registration receipt and cast receipt obtained by $V_{i_1}$ and $V_{i_2}$ are owned by themselves and not by both. $V_{i_1}$ and $V_{i_2}$ cannot own a common $\ERR_{i^*}$ entry because by Fact \ref{fact:photo-captures-one}, the decryption of $c_{\bidR_{i^*}}$ captures at most one of $V_{i_1}$ and $V_{i_2}$. Now suppose $V_{i_1}$ and $V_{i_2}$ both own a common $\ER_{i^{+}}$ entry. This means that $\ER_{i^{+}}$ maps from an $\ERR_{i_1^*}$ entry owned by $V_{i_1}$ and from an $\ERR_{i_2^*}$ entry owned by $V_{i_2}$. Since $V_{i_1}$ and $V_{i_2}$ cannot own a common $\ERR_{i^{*}}$ entry, $i_1^* \neq i_2^*$. This means that $\ER_{i^{+}}$ maps from two distinct $\ERR$ entries, which violates Lemma \ref{lem:correct-shuffle}. 

	\textbf{(2)} Suppose some voter $V \in \Voters$ owns $\ERR_{i_1^*}$ and $\ERR_{i_2^*}$ for some $i_1^{*} \neq i_2^{*}$. Then both $\bidR_{i_1^*}$ and $\bidR_{i_2^*}$ capture $V$. Thus, by Assumption \ref{assumption:CP}, $\gmatch(\bidR_{i_1^*},\bidR_{i_2^*})=1$. $V$'s ownership of $\ERR_{i_1^*}$ and $\ERR_{i_2^*}$ also implies that $\guid(\bidR_{i_1^*}, \uid_{i_1^*}) = 1$ and $\guid(\bidR_{i_2^*}, \uid_{i_2^*}) = 1$. Thus, by Assumption \ref{assumption:NDI}, $\uid_{i_1^*}=\uid_{i_2^*}$. However, this is not possible because $\ver_{\prepareRoll}=1$ only if $\uid_{i_1^*} \neq \uid_{i_2^*}$. Since each $\ERR$ entry maps to only one $\ER$ entry, $V$ owns at most one $\ER$ entry too.
\end{proof}

\begin{definition}[Items and bad items]
	\label{def:bad-items}
	An item is a registration receipt, a cast receipt, an $\ERR$ entry or an $\ER$ entry. A registration receipt $\rrec$ is \emph{bad} if auditing it during $\indrraudit$ may result in $\ver_{\rrec}=0$. A cast receipt $\crec$ is \emph{bad} if auditing it during $\indcraudit$ results in $\ver_{\crec}=0$. An $\ERR_i$ (resp. $\ER_{i}$) entry is \emph{bad} if auditing it during $\univaudit$ results in $\ver_{\ERR_i}=0$ (resp. $\ver_{\ER_i}=0$).
\end{definition}

\begin{lemma}
	\label{lem:denial-implies-bad-items}
	If $\ver_{\prepareRoll}=1$ and some $V_i$ in $(V_i)_{i \in I_{\ch}}$ does not own an $\ER_{i^{+}}$ entry s.t. $\vc{\ev}{i^{+}} = \ev_i^{\c}$, then $V_i$ owns a bad receipt.
\end{lemma}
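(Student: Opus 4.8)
The plan is to prove the contrapositive. By Definition~\ref{def:ownership}(1) the only items $V_i$ owns are its registration receipt $\rrec_i^{\r}$ and its cast receipt $\crec_i^{\c}$, so I would assume that \emph{neither} of these is bad (both always audit to $1$) and derive that $V_i$ owns an $\ER_{i^+}$ entry with $\vc{\ev}{i^+}=\ev_i^{\c}$. Throughout I would use the constraints that $\oreg$ and $\ocast$ impose on honest voters: $i \in I_{\ch}$ forces $i \in I_{\rh}$; $V_i$ registered with the correct $\uid$ (so $\guid(p,\uid)=1$ for its $\envR$ photo $p$) and accepted both receipts ($\rvalid=\cvalid=1$); and $V_i$ is eligible, cast with $\vid=\vid_i^{\r}$, and supplied $\ev_i^{\c}$.

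First I would pin down the registration receipt. A good $\reject$ receipt $(\reject,\uid,\bidpreR)$ is impossible: $\bidpreR$ captures $V_i$, so $\gmatch(\bidpreR,p)=1$ by Assumption~\ref{assumption:CP} and hence $\guid(\bidpreR,\uid)=\guid(p,\uid)=1$ by Assumption~\ref{assumption:CI}, forcing $\ver_{\rrec}=0$. A good $\alreadyreg$ receipt is impossible because its audit demands a live $\envR$ photo matching $\bidpreR$, which Assumption~\ref{assumption:UP} precludes since $V_i$ never registered through the $\accept$ branch. Thus $\rrec_i^{\r}$ is an $\accept$ receipt whose three cut-and-choose checks all pass; by soundness of the NIZKs $\rho_{\mathsf{enc},1}$, $\rho_{\mathsf{enc},2}$ and $\rho_{\mathsf{eq},1}\oplus\rho_{\mathsf{eq},2}$, the entry $(\uid,c_{\vid},c_j,c_{\ed},c_{\bidR})$ lies in $\ERR$, $c_{\bidR}$ encrypts a photo $\bidR$ capturing $V_i$, and $c_{\vid},c_j,c_{\ed}$ encrypt exactly $\vid_i^{\r},j_i^{\r},\ed_i^{\r}$. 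Together with $\guid(\bidR,\uid)=1$ (Assumptions~\ref{assumption:CP} and~\ref{assumption:CI}) this makes $V_i$ the owner of this $\ERR$ entry, so by Lemma~\ref{lem:correct-shuffle} it maps to a unique owned $\ER_{i^+}$ with $\vid'_{i^+}=\vid_i^{\r}$ and with $c'_{\ed},c'_{\bidR}$ decrypting to $\ed_i^{\r}$ and to $V_i$'s photo.

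Next I would case-analyse $\crec_i^{\c}$ against the $\ER_{i^+}$ entry just established. The $(\reject,\vid)$ form fails its audit because $\vid'_{i^+}=\vid_i^{\r}$ already appears in $\ER$; the $(\reject,\vid,\bidpreC)$ form fails because $\bidpreC$ and the registration photo $\bidR$ both capture $V_i$, giving $\gmatch(\bidpreC,\bidR)=1$ (Assumption~\ref{assumption:CP}); the $(\reject,\vid,\bidpreC,\ed',\rho_{\ed'})$ form fails because NIZK soundness plus correctness of $\ER_{i^+}$ forces $\ed'=\ed_i^{\r}$, and then $\gelg(\bidpreC,\ed')=1$ by Assumption~\ref{assumption:CE} (since $\bidpreC$ captures the eligible voter $V_i$); and the $\alreadycast$ form fails because it would require a forged live $\envC$ photo of $V_i$, again barred by Assumption~\ref{assumption:UP}. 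Since every denial form would be bad, the good $\crec_i^{\c}$ must be $(\accept,\vid_i^{\r},\ev_i^{\c})$, whose audit passes only if some $\ER$ entry has $\vid'=\vid_i^{\r}$ and $\ev=\ev_i^{\c}$; distinctness of the $\vid'$'s under $\ver_{\prepareRoll}=1$ identifies this entry with $\ER_{i^+}$, yielding $\vc{\ev}{i^+}=\ev_i^{\c}$, as desired.

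I expect the main obstacle to be the $\alreadyreg$ and $\alreadycast$ cases, which hinge on Assumption~\ref{assumption:UP} and therefore on careful bookkeeping of which photographs $\adv$ actually holds. I would argue that $\adv$ obtains a live $\envR$ (resp.\ $\envC$) photograph of an honest $V_i$ only inside the $\accept$ branch of $\register$ (resp.\ $\cast$), that $\mathsf{OCapture}$ explicitly refuses $\envR$ and $\envC$ for honest voters, and that the internal $\gcapture(V_i,\envR)$ and $\gcapture(V_i,\envC)$ calls made by $\oreg$ and $\ocast$ for the $\guid$ and eligibility checks never return their output to $\adv$; hence a voter rejected at registration or casting has handed $\adv$ no usable $\envR$/$\envC$ photograph, and the forgery that a passing $\alreadyreg$/$\alreadycast$ audit would require is infeasible. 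Once this photo-availability argument is nailed down, the remaining steps are routine applications of NIZK soundness, Lemma~\ref{lem:correct-shuffle}, and Assumptions~\ref{assumption:CP}, \ref{assumption:CE} and~\ref{assumption:CI}.
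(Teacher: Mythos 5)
Your proposal is correct and follows essentially the same route as the paper's proof: the same two-stage decomposition (registration receipt pins down an owned $\ERR$/$\ER_{i^+}$ entry via the cut-and-choose NIZKs and Lemma~\ref{lem:correct-shuffle}, then the cast-receipt case analysis forces $\ev_{i^+}=\ev_i^{\c}$), the same per-receipt-type cases, and the same uses of Assumptions~\ref{assumption:CP}, \ref{assumption:UP}, \ref{assumption:CE} and~\ref{assumption:CI}; phrasing it uniformly as a contrapositive is only a cosmetic difference. Your closing remark on tracking which $\envR$/$\envC$ photographs the adversary can actually produce is a point the paper treats more tersely, but both arguments rest on the same observation that the receipt is fixed before the oracle's internal $\gcapture(V_i,\envR)$/$\gcapture(V_i,\envC)$ calls.
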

\begin{proof}
	As per $\mathsf{Exp}_{\mathsf{soundness}}$, for each $i \in I_{\ch}$, there exists a unique $\oreg(i,\dots)$ call and a unique $\ocast(i,\dots)$ call. Also, $\vid_i^{\c}=\vid_i^{\r}$, $\gelg(p_i^{\c},\ed_i^{\r})=1$, $\cvalid_i^{\c}=1$ and $\rvalid_i^{\r}=1$. We first prove that if $V_i$ does not own an $\ER_{i^{+}}$ entry where $\vid'_{i^{+}}=\vid_i^{r}$, $c'_{\ed_{i^{+}}}$ encrypts $\ed_i^{\r}$ and $c'_{\bidR_{i^{+}}}$ encrypts a photograph capturing $V_i$ in environment $\envR$, then the registration receipt $\rrec_i^{\r}$ owned by $V_i$ is bad:
	\begin{itemize}[leftmargin=*]
		\item \emph{Case 1: $\rrec_i^{\r} = (\reject, \tilde{\uid}, \tilde{\bidpreR})$.} 
			Since $\rvalid_i^{\r}=1$, $\tilde{\bidpreR}$ captures $V_i$ and $\tilde{\uid} = \uid_i^{\r}$. Since $\uid_i^{\r}$ satisfies $\guid(p_i^{\r}, \uid_i^{\r})=1$, we have $\guid(p_i^{\r}, \tilde{\uid})=1$. Since both $p_i^{\r}$ and $\tilde{\bidpreR}$ capture $V_i$, by Assumption \ref{assumption:CI}, $\guid(\tilde{\bidpreR}, \tilde{\uid})=1$. Thus, $\rrec_i^{\r}$ is bad.
		
		\item \emph{Case 2: $\rrec_i^{\r} = (\alreadyreg, \tilde{\bidpreR}, \tilde{\bidR})$.} 
			Since $\rvalid_i^{\r}=1$, $\tilde{\bidpreR}$ captures $V_i$. Further, it must be that $\gmatch(\tilde{\bidR}, \tilde{\bidpreR})=1$ and $\glive(\bidR, \envR)=1$, otherwise $\rrec_i^{\r}$ is bad. However, this is not possible by Assumption \ref{assumption:UP} since $V_i$ never called $\gcapture$ in $\envR$ in this case.
		
		\item \emph{Case 3: $\rrec_i^{\r} = (\accept$, $\rrec_{11} = (\tilde{\uid}$, $\tilde{\bidR}$, $\tilde{\rho}_{\mathsf{enc},1})$, $\rrec_{12} = (\tilde{c}_{\vid}$, $\tilde{c}_{j}$, $\tilde{c}_{\ed}$, $\tilde{c}_{\bidR}$, $\tilde{\rho}_{\mathsf{eq},1})$, $\rrec_{21} = (\tilde{\vid}$, $\tilde{j}$, $\tilde{\ed}$, $\tilde{\rho}_{\mathsf{enc},2})$, $\rrec_{22} = (\tilde{c}_{\vid}^*$, $\tilde{c}_j^*$, $\tilde{c}_{\ed}^*$, $\tilde{\rho}_{\mathsf{eq},2})$.} 
			We show that if $\rrec_i^{\r}$ is not bad then the required $\ER$ entry exists. It must be that an entry $\ERR_{i^*} = (\tilde{\uid}, \tilde{c}_{\vid}, \tilde{c}_{j}, \tilde{c}_{\ed}, \tilde{c}_{\bidR})$ exists and $\tilde{c}_{\vid}$, $\tilde{c}_{j}$, $\tilde{c}_{\ed}$ and $\tilde{c}_{\bidR}$ respectively encrypt $\tilde{\vid}$, $\tilde{j}$, $\tilde{\ed}$ and $\tilde{\bidR}$, otherwise $\ver_{\rrec_i^{\r}}=0$ if the bad receipt component is given to $A$ by $V_i$ and thus $\rrec_i^{\r}$ is bad by Definition \ref{def:bad-items}. Further, since $\rvalid_i^{\r}=1$, $\tilde{\bidR}$ captures $V_i$ in $\envR$, $\tilde{\ed}=\ed_i^{\r}$ and $\tilde{\uid}=\uid_i^{\r}$ and thus $\guid(p_i^{\r}, \tilde{\uid})=1$. Since both $p_i^{\r}$ and $\tilde{\bidR}$ capture $V_i$, by Assumption \ref{assumption:CI}, $\guid(\tilde{\bidR}, \tilde{\uid})=1$. Thus, by Definition \ref{def:ownership}, $V_i$ owns $\ERR_{i^*}$ and also the $\ER_{i^+}$ entry mapped from $\ERR_{i^*}$. Since $\ER_{i^+}$ maps from $\ERR_{i^*}$, by Definition \ref{def:mapping-entries}, $\vid'_{i^+} = \tilde{\vid} = \vid_i^{\r}$ and $c'_{\ed_{i^+}}$ and $c'_{\bidR_{i^+}}$ encrypt $\tilde{\ed}=\ed_i^{\r}$ and $\tilde{\bidR}$. $\ER_{i^+}$ is thus the desired entry.
	\end{itemize}

	Now, we show that if $V_i$ owns such an $\ER_{i^{+}}$ entry but $\vc{\ev}{i^{+}} \neq \ev_i^{\c}$, then the cast receipt $\crec_i^{\c}$ owned by $V_i$ is bad:
	\begin{itemize}[leftmargin=*]
		\item \emph{Case 1: $\crec_i^{\r} = (\reject, \tilde{\vid})$.} 
			Since $\cvalid_i^{\c}=1$, $\tilde{\vid}=\vid_i^{\c}=\vid_i^{\r}=\vid'_{i^+}$. Thus, $\crec_i^{\c}$ is bad.
		
		\item \emph{Case 2: $\crec_i^{\r} = (\reject, \tilde{\vid}, \tilde{\bidpreC})$.} 
			Since $\cvalid_i^{\c}=1$, $\tilde{\vid}=\vid_i^{\c}=\vid_i^{\r}=\vid$, where $\vid$ denotes the identity printed in $\rrec_i^{\r}$, and $\tilde{\bidpreC}$ captures $V_i$. Further, since $\rvalid_i^{\r}=1$, $\bidR$ printed in $\rrec_i^{\r}$ captures $V_i$. Thus, by Assumption \ref{assumption:CP}, $\gmatch(\bidR, \tilde{\bidpreC})=1$. Thus, Thus, $\crec_i^{\c}$ is bad.

		\item \emph{Case 3: $\crec_i^{\r} = (\reject, \tilde{\vid}, \tilde{\bidpreC}, \tilde{\ed}', \tilde{\rho}_{\ed'})$.} 
			Since $\cvalid_i^{\c}=1$, $\tilde{\vid} = \vid'_{i^+}$ and $\tilde{\bidpreC}$ captures $V_i$. Also, there must exist an $i^{++}$ s.t. $\vid'_{i^{++}} = \tilde{\vid}$, otherwise $\crec_i^{\r}$ is bad. Since all $\vid$s in $\ER$ are distinct, $i^{++} = i^+$. Thus, by the proof $\tilde{\rho}_{\ed'}$, $\tilde{\ed}'$ is the correct decryption of $c'_{\ed_{i^+}}$, i.e., $\tilde{\ed}' = \ed_i^{\r}$, otherwise $\crec_i^{\r}$ is bad. Since $\gelg(p_i^{\r}, \ed_i^{\r})=1$, by Assumption \ref{assumption:CE}, $\gelg(\tilde{\bidpreC}, \tilde{\ed}')=1$. Thus, $\crec_i^{\r}$ is bad.
		
		\item \emph{Case 4: $\crec_i^{\r} = (\alreadycast, \tilde{\bidpreC}, \tilde{\bidC})$.} 
			Since $\cvalid_i^{\c}=1$, $\tilde{\bidpreC}$ captures $V_i$. Thus, by Assumption \ref{assumption:UP}, it cannot be that $\gmatch(\tilde{\bidpreC}, \tilde{\bidC})=1$ and $\glive(\tilde{\bidC}, \envC)=1$, because $V_i$ never called $\gcapture$ in $\envC$ in this case. Thus, $\crec_i^{\c}$ is bad.

		\item \emph{Case 5: $\crec_i^{\r} = (\accept, \tilde{\vid}, \tilde{\ev})$.} 
			If $\crec_i^{\c}$ is not bad, then an entry $\ER_{i^{++}} = (\tilde{\vid},\dots)$ exists such that $\ev_{i^{++}}=\tilde{\ev}$. Since $\cvalid_i^{\c}=1$, $\tilde{\vid}=\vid'_{i^+}$ and $\tilde{\ev} = \ev_i^{\c}$. Since each $\vid$ in $\ER$ is distinct, $i^{++}=i^+$. Thus, $\ev_{i^{+}} = \ev_{i^{++}}=\ev_i^{\c}$. \qedhere
	\end{itemize}
\end{proof}

\begin{lemma}
	\label{lem:stuffing-implies-bad-items}
	If $\ver_{\prepareRoll}=1$, then for each $\ER_{i^{+}}$ entry s.t. $\ev_{i^{+}}$ is valid, \textbf{(1)} if the decryption $\bidR_{i^{+}}'$ of $c'_{\bidR_{i^+}}$ does not capture any voter in $(V_i)_{i \in I_{\ch} \cup I_{\cdh}}$ then $\ER_{i^{+}}$ is bad;
	\textbf{(2)} if $\bidR_{i^{+}}'$ captures some voter $V$ in $(V_i)_{i \in I_{\ch} \cup I_{\cdh}}$ but $V$ does not own $\ER_{i^{+}}$ then the $\ERR_{i^*}$ entry mapping to $\ER_{i^{+}}$ is bad.
\end{lemma}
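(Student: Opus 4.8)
The plan is to prove the two parts separately, in both cases unfolding Definition \ref{def:bad-items} via the per-entry checks of $\univaudit$ in Figures \ref{fig:protocol-register} and \ref{fig:protocol-cast}. Throughout I use that $\ver_{\prepareRoll}=1$ lets me invoke Lemma \ref{lem:correct-shuffle}, so $\ER_{i^+}$ maps from a \emph{unique} $\ERR_{i^*}$, and that Definition \ref{def:mapping-entries} then identifies the correct decryption $\bidR_{i^*}$ of $c_{\bidR_{i^*}}$ with the correct decryption $\bidR'_{i^+}$ of $c'_{\bidR_{i^+}}$ (and likewise for the eligibility ciphertexts). I also use that, by definition, a photograph ``captures'' a voter exactly when it is an output of $\gcapture$, and is therefore live in the environment in which it was taken.

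Part (2) is the more routine direction. I would start from the unique mapping: since $\bidR'_{i^+}$ captures $V$ and $\bidR_{i^*}=\bidR'_{i^+}$ by Definition \ref{def:mapping-entries}, the entry $\ERR_{i^*}$ already satisfies the capture clause of entry-ownership in Definition \ref{def:ownership}. By Lemma \ref{lem:correct-shuffle}, $\ER_{i^+}$ maps from $\ERR_{i^*}$ alone, so $V$ owns $\ER_{i^+}$ if and only if $V$ owns $\ERR_{i^*}$, which in turn holds if and only if $\guid(\bidR_{i^*},\uid_{i^*})=1$. The hypothesis that $V$ does not own $\ER_{i^+}$ therefore forces $\guid(\bidR_{i^*},\uid_{i^*})\neq 1$. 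But this is precisely the condition under which the $\ERR$ audit sets $\ver_{\ERR_{i^*}}=0$, so $\ERR_{i^*}$ is bad by Definition \ref{def:bad-items}.

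Part (1) is where the real work lies, and I would argue by contradiction. Suppose $\ER_{i^+}$ is \emph{not} bad; then the $\ER$ audit check succeeds, i.e.\ $\gmatch(\bidC_{i^+},\bidR'_{i^+})=\glive(\bidC_{i^+},\envC)=\gelg(\bidR'_{i^+},\ed'_{i^+})=1$. From $\gelg(\bidR'_{i^+},\ed'_{i^+})=1$ and the second clause of Assumption \ref{assumption:CE}, $\bidR'_{i^+}$ must capture some (by Fact \ref{fact:photo-captures-one}, unique) voter $V^{**}$, and by hypothesis $V^{**}\notin (V_i)_{i\in I_{\ch}\cup I_{\cdh}}$. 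Writing $\bidR'_{i^+}=\gcapture(V^{**},\env_0)$ gives $\glive(\bidR'_{i^+},\env_0)=1$; moreover $\env_0\neq\envC$, since a call $\gcapture(V^{**},\envC)$ occurs only inside $\mathsf{OCast}$ (after which an honest $V^{**}$ is placed in $I_{\ch}$) or inside $\mathsf{OCapture}$ with $\env=\envC$ (after which an eligible dishonest $V^{**}$ is placed in $I_{\cdh}$), either of which would contradict $V^{**}\notin I_{\ch}\cup I_{\cdh}$. Now I apply Assumption \ref{assumption:UP} with reference photograph $\bidR'_{i^+}$ (live in $\env_0\neq\envC$) and target environment $\envC$: the audit supplies $\gmatch(\bidC_{i^+},\bidR'_{i^+})=1$ and $\glive(\bidC_{i^+},\envC)=1$, and $\adv$ did produce $\bidC_{i^+}$ (as the decryption of the uploaded $c_{\bidC_{i^+}}$), so except with negligible probability $\gcapture(V^{**},\envC)$ must have been called. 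As noted this places $V^{**}$ in $I_{\ch}\cup I_{\cdh}$ — the eligibility side-condition for the $I_{\cdh}$ branch being discharged via Assumption \ref{assumption:CE}, since $\bidC_{i^+}$ matches $\bidR'_{i^+}$ and hence $\gelg(\bidC_{i^+},\ed'_{i^+})=\gelg(\bidR'_{i^+},\ed'_{i^+})=1$ — yielding the desired contradiction. Hence $\ER_{i^+}$ is bad.

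The main obstacle is the liveness/unforgeability bookkeeping in part (1): one must argue that a photograph simultaneously live in $\envC$ and matched to $\bidR'_{i^+}$ can only have arisen from a genuine $\gcapture(V^{**},\envC)$ invocation, and that \emph{every} such invocation in $\mathsf{Exp}_{\mathsf{soundness}}$ is recorded into $I_{\ch}$ or $I_{\cdh}$. This requires (i) selecting the reference photograph so that Assumption \ref{assumption:UP} applies with $\env_0\neq\envC$; (ii) using the symmetry of $\gmatch$ together with Assumptions \ref{assumption:NT} and \ref{assumption:CE} to transport the capture and eligibility facts between $\bidR'_{i^+}$ and $\bidC_{i^+}$; and (iii) checking that the only places $\gcapture(\cdot,\envC)$ is invoked are the two oracle branches above, in particular verifying that the pre-assert capture inside $\mathsf{OCast}$ does not hand $\adv$ a usable live $\envC$ photograph on runs that it wins.
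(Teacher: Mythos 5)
Your proof is correct and takes essentially the same route as the paper's: part (2) is the paper's argument almost verbatim (ownership of $\ER_{i^+}$ reduces via Lemma~\ref{lem:correct-shuffle} and Definition~\ref{def:ownership} to $\guid(\bidR_{i^*},\uid_{i^*})=1$, whose failure is exactly the $\ERR$-audit check), and part (1) is the contrapositive of the paper's two-case split (either no $\gcapture(\cdot,\envC)$ call was ever made, so Assumption~\ref{assumption:UP} defeats the $\gmatch$/$\glive$ checks, or an $\envC$ capture of a voter for whom $\gelg$ always fails occurred, so Assumptions~\ref{assumption:CP} and~\ref{assumption:CE} defeat the $\gelg$ check). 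The obstacle you flag at the end --- that the pre-assert $\gcapture(V_i,\envC)$ call in \ocast\ can leave a live $\envC$ photograph of an honest voter who never enters $I_{\ch}$ (e.g.\ because $\cvalid=0$) --- is genuine, but it is equally unaddressed in the paper, whose case analysis covers only ``no $\ocast(i,\dots)$ call was made'' and omits ``an $\ocast(i,\dots)$ call was made but the assertion failed.''
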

\begin{proof}
	\textbf{(1)} Note that $\bidR_{i^{+}}'$ must capture some voter in $\Voters$, otherwise by Assumption \ref{assumption:CE}, $\gelg(\bidR'_{i^+},\ed)=0$ for any $\ed$ and thus $\ER_{i^+}$ is bad. If $\bidR_{i^{+}}'$ captures some voter $V_i \in \Voters$ but $i \not\in I_{\ch} \cup I_{\cdh}$, then the following cases arise:
	\begin{itemize}[leftmargin=*]
		\item[-] \emph{No $\ocast(i\in I_{\h}, \dots)$ or $\mathsf{OCapture}(i\not\in I_{\h}, \envC)$ call was made:} In this case, no $\gcapture(V_{i}, \envC)$ call was made. By Assumption \ref{assumption:UP}, the decryption $\bidC_{i^+}$ of $c_{\bidC_{i^+}}$ cannot satisfy $\gmatch(\bidR_{i^{+}}', \bidC_{i^+})=1$ and $\glive(\bidC_{i^+}, \envC)=1$. Thus, $\ER_{i^+}$ is bad.
		\item[-] \emph{Some $\mathsf{OCapture}(i\not\in I_{\h}, \envC)$ call was made but $\forall \ed: \gelg(p, \ed)=0$, where $p$ denotes the photograph capturing $V_i$ in this call:} Since both $\bidR'_{i^{+}}$ and $p$ capture $V_{i}$, $\gmatch(\bidR'_{i^{+}}$, $p)=1$ by Assumption \ref{assumption:CP}. Thus, by Assumption \ref{assumption:CE}, $\gelg(\bidR'_{i^{+}},\ed'_{i^{+}})=0$. Thus, $\ER_{i^+}$ is bad.
	\end{itemize}

	\noindent \textbf{(2)} Since $\bidR_{i^{+}}'$ captures $V_i$, the entry $\ERR_{i^*}$ mapping to $\ER_{i^{+}}$ is such that $\bidR_{i^*}$ captures $V_i$ too. If $V_i$ does not own $\ER_{i^{+}}$, then by Definition \ref{def:ownership}, $V_i$ does not own $\ERR_{i^*}$. Since $\bidR_{i^*}$ captures $V_i$, by Definition \ref{def:ownership}, $\guid(\bidR_{i^*}, \uid_{i^*}) \neq 1$. Thus, $\ERR_{i^*}$ is bad. \qedhere  
\end{proof} 

\begin{definition}[Hypergeometric distribution]
	Let there be a bag of $n$ objects out of which $f$ are special in some way. Let $\alpha$ random objects be sampled from the bag without replacement and let $d$ denote the probability of detecting a special object once chosen. Then, we let $\mathsf{Hyp}(n, \alpha, f, d)$ denote the probability of \emph{not} detecting even a single special object. By standard probability theory, $\mathsf{Hyp}(n, \alpha, f, 1) = \frac{\binom{n-f}{\alpha}}{\binom{n}{\alpha}}$ and $\mathsf{Hyp}(n, \alpha, f, d) = \sum\limits_{k=0}^{\alpha} \frac{\binom{f}{k}\binom{n-f}{\alpha-k}}{\binom{n}{\alpha}} \cdot (1-d)^k$ for $d<1$.
\end{definition}

\begin{theorem}[Soundness]
	\label{thm:soundness}
	Under the simulation security of the $\thsd$ protocol, our electoral roll protocol provides $\epsilon(n_{\d}$, $\alpha_{\d}$, $f_{\d}$, $N_{\s}$, $n_{\s}$, $\alpha_{\s}$, $f_{\s})$-soundness (see Definition \ref{def:soundness}), where $\epsilon(n_{\d}$, $\alpha_{\d}$, $f_{\d}$, $N_{\s}$, $n_{\s}$, $\alpha_{\s}$, $f_{\s}):=2\max(\{(\mathsf{Hyp}(n_{\d}$, $\alpha_{\d}$, $f_{\d}/2$, $1/3))^2$, $(\mathsf{Hyp}(N_{\s}+f_{\s}$, $\alpha_{\s}$, $f_{\s}/2$, $1))^2\}$, in the random oracle model.
\end{theorem}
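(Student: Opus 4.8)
The plan is to bound the adversary's winning probability by controlling the two disjuncts of the win condition (lines 13--14 of $\mathsf{Exp}_{\mathsf{soundness}}$) separately and combining them with a union bound. I would first make two reductions. Since $\ver = 1$ requires $\ver_{\prepareRoll} = 1$ (the universal audit returns the conjunction), I may assume $\ver_{\prepareRoll} = 1$ throughout, which is exactly the hypothesis under which Lemmas~\ref{lem:correct-shuffle}--\ref{lem:stuffing-implies-bad-items} apply. Next, because the adversary controls all backend servers, the plaintexts that the honest auditor reads off during the audits are produced by adversarial partial decryptions; I would invoke the simulation security of $\thsd$ together with Fiat--Shamir soundness in the random oracle model to pass to a hybrid in which every decryption obtained during $\indrraudit$, $\indcraudit$ and $\univaudit$ is the unique correct plaintext, at the cost of a negligible additive term. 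The random oracle is also what makes the hash-based selection $H_\alpha$ behave as a uniform subsample that is independent of the (already committed) lists, so that the hypergeometric calculus is valid; I would account here for the adversary's bounded ability to re-grind the lists.

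For the denial disjunct I would show that a deficiency of more than $f_{\d}$ honest votes forces more than $f_{\d}$ \emph{bad receipts}. By Lemma~\ref{lem:ownership} each casting voter owns at most one $\ER$ entry, so every honest voter whose cast vote is correctly published contributes a distinct matching copy to $\EV$; a short multiset argument then gives that $|\EVh \setminus \EV|$ is at most the number of casting voters whose vote is \emph{not} correctly published. By Lemma~\ref{lem:denial-implies-bad-items} each such voter owns a bad registration receipt or a bad cast receipt, and by Lemma~\ref{lem:ownership} these receipts are distinct. Writing $f^{\r}$ and $f^{\c}$ for the counts of bad registration and cast receipts, I obtain $f^{\r} + f^{\c} > f_{\d}$. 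The experiment audits $\alpha_{\d}$ random registration receipts and, independently, $\alpha_{\d}$ random cast receipts, and the adversary evades detection only if \emph{both} audits miss all their bad receipts; a bad registration receipt is flagged with probability at least $1/3$ (the cut-and-choose of \S\ref{sec:registration-denial-accept}) and a bad cast receipt with probability $1$. Hence the denial-win probability is at most a product of the form $\mathsf{Hyp}(\cdot,\alpha_{\d},f^{\r},1/3)\cdot\mathsf{Hyp}(\cdot,\alpha_{\d},f^{\c},1)$.

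For the stuffing disjunct I would argue dually that an excess of more than $f_{\s}$ published valid votes forces more than $f_{\s}$ bad $\ER$/$\ERR$ entries. Each valid published $\ev$ sits in a valid $\ER$ entry; by Lemma~\ref{lem:ownership} at most $|I_{\ch}\cup I_{\cdh}|$ of these are owned by an eligible casting voter, so more than $f_{\s}$ valid entries are unowned. By Lemma~\ref{lem:stuffing-implies-bad-items} each unowned valid entry is itself a bad $\ER$ entry or maps from a bad $\ERR$ entry, and by Lemma~\ref{lem:correct-shuffle} the latter $\ERR$ entries are distinct. With $g^{\ER}+g^{\ERR} > f_{\s}$ bad entries, and the two universal-audit subsamples (over the $\ER$ list and the $\ERR$ list) again independent with detection probability $1$, the stuffing-win probability is at most $\mathsf{Hyp}(\cdot,\alpha_{\s},g^{\ER},1)\cdot\mathsf{Hyp}(\cdot,\alpha_{\s},g^{\ERR},1)$.

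The final and, I expect, most delicate step is to collapse each of these products of two hypergeometric miss-probabilities into the claimed \emph{squared} term. The key is that $\mathsf{Hyp}(n,\alpha,\cdot,d)$ is log-concave in the number of special items; since the two bad-item counts sum to more than $f_{\d}$ (resp.\ $f_{\s}$), the product is maximised at the balanced split and is therefore bounded by $\mathsf{Hyp}(\cdot,\alpha_{\d},f_{\d}/2,1/3)^2$ (resp.\ $\mathsf{Hyp}(\cdot,\alpha_{\s},f_{\s}/2,1)^2$), where I additionally use monotonicity of $\mathsf{Hyp}$ to replace the cast-side $d=1$ by the worst-case $d=1/3$ in the denial term and to replace the true population sizes by the stated $n_{\d}$ and $N_{\s}+f_{\s}$. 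A union bound over the two disjuncts together with $A^2 + B^2 \le 2\max(A^2,B^2)$ then yields $\epsilon$. The two points I would watch most carefully are (i) establishing the log-concavity and the balanced-split inequality that actually produce the square, and (ii) pinning down the effective bag sizes---in particular arguing that adversarially added ``chaff'' entries cannot inflate the audited populations beyond $n_{\d}$ and $N_{\s}+f_{\s}$ without themselves becoming detectable---so that the monotonicity replacement is legitimate.
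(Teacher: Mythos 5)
Your proposal follows essentially the same route as the paper's proof: split into the denial and stuffing disjuncts, invoke simulation security of $\thsd$ to assume correct decryptions, use Lemmas~\ref{lem:correct-shuffle}--\ref{lem:stuffing-implies-bad-items} to convert a deficit/excess of more than $f_{\d}$ (resp.\ $f_{\s}$) into that many \emph{distinct} bad receipts (resp.\ bad $\ERR$/$\ER$ entries), bound each disjunct by a product of two hypergeometric miss-probabilities, and collapse to the squared term at the balanced split. The only difference is that you justify the balanced-split maximisation via log-concavity of $\mathsf{Hyp}$ in the number of special items, where the paper simply appeals to ``symmetry''; your version is the more careful one, and your two flagged caveats (the square step and the effective bag sizes) are precisely the points the paper treats most tersely.
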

\begin{proof}
	\renewcommand{\r}{\ensuremath{\mathsf{r}}}
	\renewcommand{\c}{\ensuremath{\mathsf{c}}}	
	Suppose there exists a PPT adversary $\adv$, parameters $\lambda$, $n_{\d}$, $\alpha_{\d}$, $f_{\d}$, $N_{\s}$, $n_{\s}$, $\alpha_{\s}$, $f_{\s}$  $\in$ $\Nat$ and oracles $\gset = \{\gcapture$, $\glive$, $\gmatch$, $\gelg$, $\guid\}$ satisfying Assumptions \ref{assumption:CP}-\ref{assumption:CI} such that $\Pr[\mathsf{Exp}_{\mathsf{soundness}}^{\adv,\gset}(1^{\lambda}$, $n_{\d}$, $\alpha_{\d}$, $f_{\d}$, $N_{\s}$, $n_{\s}$, $\alpha_{\s}$, $f_{\s})$ $=$ $1]$ $>$ $\epsilon(n_{\d}$, $\alpha_{\d}$, $f_{\d}$, $N_{\s}$, $n_{\s}$, $\alpha_{\s}$, $f_{\s})$. This leads to the following two cases (in both the cases, we assume that all decryptions obtained by auditor $A$ are correct - this holds under the simulation security of $\thsd$): \\

	\noindent \textbf{Case 1 [Voter denial]: $\Pr[\ver=1$ $\wedge$ $|I_{\ch}|=n_{\d}$ $\wedge$ $|\EVh \setminus \EV| > f_{\d}]$ $>$ $\epsilon(n_{\d}$, $\alpha_{\d}$, $f_{\d}$, $N_{\s}$, $n_{\s}$, $\alpha_{\s}$, $f_{\s})/2$}\textbf{.} As per $\mathsf{Exp}_{\mathsf{soundness}}$, each $\ev \in \EVh$ equals some $\ev_i^{\c}$ obtained in a unique $\ocast(i, \dots)$ call for some $i \in I_{\ch}$. Thus, $\EVh = \multiset{\ev_i^{\c} \mid i \in I_{\ch}}$. For each $\ev \in \EVh$, let $I_{\ev}:=\{i \in I_{\ch} \mid \ev_i^{\c} = \ev\}$ represent voters whose given encrypted vote was $\ev$ (multiple voters could be given the same $\ev$ by the adversary). Since each voter $(V_i)_{i \in I_{\ch}}$ is allowed to set $\ev_i^{\c}$ only once, each $I_{\ev}$ set is disjoint. Thus, we can write $|\EVh \setminus \EV| = \sum_{\ev \in \EVh} \max(I_{\ev}-\mult{\EV}{\ev},0)$, where $\mult{\EV}{\ev}$ equals the number of indices $i^{+}$ in list $\ve{\ev}$ produced by $\adv$ such that $\ev_{i^{+}} = \ev$ ($\mult{\EV}{\ev}=0$ if no such index exists). 
	
	By Lemma \ref{lem:ownership}, each $\ER$ entry is owned by at most one voter. Thus, at most $\mult{\EV}{\ev}$ voters own an $\ER_{i^{+}}$ entry such that $\ev_{i^{+}}=\ev$. Thus, the size of the set $I_{\ev,\mathsf{denied}}:=\{ i \in I_{\ev} \mid V_i$ does not own any $\ER_{i^{+}}$ entry s.t. $\ev_{i^{+}} = \ev  \}$ is at least $\max(|I_{\ev}|-\mult{\EV}{\ev}, 0)$. Since each $I_{\ev}$ set is disjoint, each $I_{\ev,\mathsf{denied}}$ set is disjoint too. Thus, the size of the set $I_{\mathsf{denied}} := \{ i \in I_{\ch} \mid V_i \text{ does not own any } \ER_{i^{+}} \text{ entry s.t. } \ev_{i^{+}} = \ev_i^{\c} \}$ $=$  $\bigcup_{\ev \in \EVh} I_{\ev,\mathsf{denied}}$ is at least $\sum_{\ev \in \EVh} \max(|I_{\ev}|-\mult{\EV}{\ev},0)$ $=$ $|\EVh \setminus \EV|$. Thus, $|\EVh \setminus \EV|>f$ implies $|I_{\mathsf{denied}}| > f$.

	By Lemma \ref{lem:denial-implies-bad-items}, if $i \in I_{\mathsf{denied}}$, i.e., if $V_i$ does not own an $\ER_{i^{+}}$ entry s.t. $\ev_{i^{+}}=\ev_i^{\c}$, then $V_i$ owns a bad receipt. Since by Lemma \ref{lem:ownership}, each item is owned by at most one voter, no bad receipt is owned by two different voters $V_{i_1},V_{i_2}$ for $i_1,i_2 \in I_{\mathsf{denied}}$. Thus, if $|I_{\mathsf{denied}}|>f$, then at least $f$ distinct bad receipts are produced. 
	
	The probability that $\ver=1$ in this case is at most $\max\limits_{f_1,f_2 \text{ s.t. } f_1+f_2=f_{\d}}$ $\{\mathsf{Hyp}(n_{\d}$, $\alpha_{\d}$, $f_1$, $1/3)$ $\cdot$ $\mathsf{Hyp}(n_{\d}$, $\alpha_{\d}$, $f_2$, $1)\}$, where the $1/3$ term comes from the fact that the probability of detection of a bad registration receipt is only $1/3$. This is at most $\max\limits_{f_1,f_2 \text{ s.t. } f_1+f_2=f_{\d}}$ $\{\mathsf{Hyp}(n_{\d}$, $\alpha_{\d}$, $f_1$, $1/3)$ $\cdot$ $\mathsf{Hyp}(n_{\d}$, $\alpha_{\d}$, $f_2$, $1/3)\}$ $=$ $(\mathsf{Hyp}(n_{\d}$, $\alpha_{\d}$, $f_{\d}/2$, $1/3))^2$, since by symmetry the expression takes the maximum value when $f_1=f_2=f_{\d}/2$. This is a contradiction. \\

	\noindent \textbf{Case 2 [Vote stuffing]: $\Pr[\ver=1$ $\wedge$ $|I_{\rh}|+|I_{\rdh}|=N_{\s}$ $\wedge$ $|I_{\ch}|+|I_{\cdh}|=n_{\s}$ $\wedge$ $|\EV| - |I_{\ch} \cup I_{\cdh}| > f_{\s}]$ $>$ $\epsilon(n_{\d}$, $\alpha_{\d}$, $f_{\d}$, $N_{\s}$, $n_{\s}$, $\alpha_{\s}$, $f_{\s})/2$}\textbf{.} Note that each element $\ev \in \EV$ appears in list $\ve{\ev}$ at some distinct index $i^{+}$ such that $\ER_{i^+}$ exists. Let $\bidR'_{i^{+}}$ denote the decryption of $c'_{\bidR_{i^{+}}}$ at $\ER_{i^+}$. Let $I_{\mathsf{extravoter}}^{+}:=\{ i^+ \mid \bidR'_{i^{+}}$  does not capture any voter in  $(V_i)_{i \in I_{\ch} \cup I_{\cdh}} \}$ and $I_{\mathsf{extravote},i}^{+}:=\{ i^{+} \mid \bidR'_{i^{+}}$ captures $V_i$ for some $i \in I_{\ch} \cup I_{\cdh}$ but $V_i$ does not own $\ER_{i^{+}} \}$. Since by Lemma \ref{lem:ownership}, each voter can own at most one $\ER$ entry, if there are $m$ indices $i^{+}$ such that $\bidR_{i^{+}}$ captures $V_i$ then $|I^{+}_{\mathsf{extravote},i}| \geq m-1$. Let $I_{\mathsf{stuffed}}^{+}:= I_{\mathsf{extravoter}}^{+} \cup \bigcup_{i \in I_{\ch} \cup I_{\cdh}} I_{\mathsf{extravote},i}^{+}$. Then $|I_{\mathsf{stuffed}}^{+}| = |\EV| - |I_{\ch} \cup I_{\cdh}|$. Thus, $|\EV| - |I_{\ch} \cup I_{\cdh}|>f_{\s}$ implies $|I_{\mathsf{stuffed}}^{+}|>f_{\s}$.

	By Lemma \ref{lem:stuffing-implies-bad-items}, for any $i^+ \in I_{\mathsf{extravoter}}^{+}$, $\ER_{i^+}$ is bad and for any $i^+ \in I_{\mathsf{extravote},i}^{+}$ for any $i$, the $\ERR_{i^*}$ entry mapping to $\ER_{i^+}$ is bad. Since by Lemma \ref{lem:correct-shuffle}, each $\ERR$ entry maps to only one $\ER$ entry, if $|I_{\mathsf{stuffed}}^{+}|>f_{\s}$, then at least $f_{\s}$ distinct bad ERR/ER entries are produced. 

	The probability that $\ver=1$ in this case is at most $\max\limits_{f_1,f_2 \text{ s.t. } f_1+f_2=f_{\s}}$ $\{\mathsf{Hyp}(N_{\s}+f_{\s}$, $\alpha_{\s}$, $f_1$, $1)$ $\cdot$ $\mathsf{Hyp}(N_{\s}+f_{\s}$, $\alpha_{\s}$, $f_2$, $1)\}$ $=$ $(\mathsf{Hyp}(N_{\s}+f_{\s}$, $\alpha_{\s}$, $f_{\s}/2$, $1))^2$, since by symmetry the expression takes the maximum value when $f_1=f_2=f_{\s}/2$. This is a contradiction. \qedhere
\end{proof}

\begin{theorem}[Privacy]
	\label{thm:privacy}
	Under the IND-CPA security of the $\ths$ scheme, our protocol provides $\delta(n,\alpha)$-privacy (see Definition \ref{def:privacy}), where $\delta(n,\alpha):=\max\{1$ $-$ $(\mathsf{Hyp}(n$, $\alpha$, $1$, $1))^2$ $(1$ $-$ $\alpha/n)^2$, $1$ $-$ $\mathsf{Hyp}(n$, $\alpha$, $2$, $1)$ $(1$ $-$ $2\alpha/n)$, $1$ $-$ $(\mathsf{Hyp}(n$, $\alpha$, $1$, $1))^2\}$, in the random oracle model.
\end{theorem}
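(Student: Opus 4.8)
The plan is to prove the three clauses of Definition~\ref{def:privacy} separately, showing that each of the three terms inside the $\max$ bounds $\adv$'s advantage for the corresponding clause; since $\delta(n,\alpha)$ dominates all three, the theorem follows. In every case I would use the same two-part template. First I identify the \emph{leak channels}: the audit sub-protocols whose execution can expose in plaintext the piece of information that differs between the two challenge worlds for the target voter(s). Second, I condition on the event $E$ that \emph{no} leak channel touches the target(s): conditioned on $E$ the two worlds are indistinguishable up to a negligible IND-CPA/shuffle term, and conditioned on $\neg E$ the advantage is bounded trivially by $1$. The advantage is then at most $\Pr[\neg E]+\negl(\lambda)$, and $\Pr[\neg E]$ is exactly the relevant $\mathsf{Hyp}$-expression because, in the random oracle model, $H_\alpha$ makes each universal-audit selection a uniformly random $\alpha$-subset (fixed across re-invocations, so no extra entries ever decrypt), while the individual audits touch at most $\alpha$ random honest voters.

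I would establish the infrastructural facts once, up front. Because at least one backend server $S_{k^*}$ is honest and releases a decryption share only for entries an audit legitimately selects (EOs get only block-$j$ shares, POs only booth-$l$ shares, the auditor only the $H_\alpha$-selected shares), every ciphertext not hit by a leak channel stays semantically secure, via a standard reduction to IND-CPA of $\ths$ that simulates $S_{k^*}$. Likewise $S_{k^*}$'s secret re-encryption and permutation in $\prepareRoll$ make $\ve{c'_{\vid}},\ve{c'_{j}},\ve{c'_{\ed}},\ve{c'_{\bidR}}$ unlinkable to their $\ERR$ preimages (reduction to re-encryption indistinguishability plus zero-knowledge of the shuffle proof), so $\adv$ cannot tell which $\ER$ entry is the target's. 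The publicly decrypted $\ve{\vid'},\ve{j'}$ and public $\ve{\rresp}$ need separate care: $\vid'$ is a fresh uniform value identically distributed in both worlds, and I would use the fact (implicit for each clause to be non-trivial) that the challenge instances agree on the publicly observable $j'$ column and on eligibility, so these public columns are identically distributed and carry no advantage.

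Mapping clauses to terms: for clause~1 (profiling) the only channels exposing $V_i$'s $\ed$ are the $\ER$ audit (which decrypts $\ed'_i$) and $\indrraudit$ (which reveals $\ed$ on the $\beta{=}1$ branch); the $\ERR$ audit exposes only $\bidR$, and $\indcraudit$ on an $\accept$ receipt exposes only $(\vid,\ev)$, neither depending on $\ed$. Two independent single-target channels give $\Pr[\neg E]=1-(\mathsf{Hyp}(n,\alpha,1,1))^2$, the third term. For clause~2 (forced abstention) the channels revealing \emph{participation} of $V_{i_0}$ or $V_{i_1}$ are the $\ER$ audit (the decrypted $\bidC$ matches $\bidR'$ iff the voter cast, the dummy $\ev$/$c_{\bidC}$ being indistinguishable otherwise) and $\indcraudit$ ($\accept$ vs.\ absent); two voters are at risk per channel, and since the individual audits give only the weaker ``at most $\alpha$ of at least $n$'' guarantee I would bound that channel by the union term $(1-2\alpha/n)$ while the $\ER$ audit gives the exact $\mathsf{Hyp}(n,\alpha,2,1)$, producing the second term. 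For clause~3 (unlinkability) I would pessimistically treat \emph{all four} channels (both universal-audit columns and both individual audits) as able to contribute a fragment usable for linking $\uid$ to the $\ER$-side data, so $E$ requires missing the target on all four, giving $\Pr[\neg E]=1-(\mathsf{Hyp}(n,\alpha,1,1))^2(1-\alpha/n)^2$, the first term; the cut-and-choose structure of the $\accept$ receipt, together with the $(2,2)$-sharing of $\rho_{\mathsf{eq}}$, is what guarantees that, absent an $\ER$-side hit, no single individual-audit reveal exposes both the $\uid$ side and the $\vid$ side of the same voter.

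The main obstacle I anticipate is making the ``conditioned on $E$'' reduction genuinely clean rather than hand-wavy, and two points are delicate. First, I must show that the information $\adv$ is \emph{allowed} to see along a non-target channel — a partial $\accept$ receipt of another voter, or the decryption of a non-target $\ER$ entry — can be simulated by the IND-CPA/shuffle reduction without the target's challenge bit; this needs the honest server's selective-decryption discipline and the receipt's secret-shared proof to compose correctly. Second, I must confirm that the \emph{public} columns ($\vid'$, $j'$, $\rresp$, and the $\ERR$ $\uid$ list) carry no residual signal between the worlds, which is exactly where the implicit agreement of the challenge instances on publicly observable quantities is essential; pinning down precisely which equalities are required, and verifying the clause is vacuous or the bound degrades gracefully when they fail, is the fiddliest part of the argument.
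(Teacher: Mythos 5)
Your proposal is correct in substance and follows the same template as the paper's proof (per clause: define a ``bad''/leak event, bound its probability by the appropriate $\mathsf{Hyp}$ expression in the random-oracle model for $H_\alpha$, and reduce the conditional advantage to IND-CPA of $\ths$ plus simulatability of the NIZKs and the shuffle), but you assign the three terms of $\delta$ to the clauses differently. The paper is conservative for clause~1 (profiling): its bad event includes \emph{all four} channels ($\ERR$ audit, $\ER$ audit, $\indrraudit$, $\indcraudit$), yielding the first term $1-(\mathsf{Hyp}(n,\alpha,1,1))^2(1-\alpha/n)^2$; and it is tight for clause~3 (unlinkability): its bad event covers only the two universal-audit channels, yielding the third term $1-(\mathsf{Hyp}(n,\alpha,1,1))^2$, and it then shows via an explicit four-case analysis ($C_0=(\uid_0,\ed_0,j_0),\dots,C_3$) that $V_i$'s participation in $\indrraudit$/$\indcraudit$ contributes \emph{zero} additional advantage, precisely because each receipt half reveals only the $\uid$ side or only the $(\vid,\ed,j)$ side and the marginals are identical across the two worlds. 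You do the opposite: tight for clause~1 (arguing the $\ERR$ audit exposes only the photograph and $\indcraudit$ only $(\vid,\ev)$, neither of which depends on $\ed$, so only two channels remain) and conservative for clause~3 (paying the $(1-\alpha/n)^2$ factor to avoid the four-case argument entirely, which makes your invocation of the cut-and-choose structure there redundant rather than load-bearing). Since Definition~\ref{def:privacy} only requires each clause to be bounded by $\delta(n,\alpha)=\max\{\cdot\}$, both decompositions establish the theorem as stated; your route buys a simpler clause~3 at the cost of a weaker per-clause constant there, while the paper's route is the one that actually motivates why the third term appears in $\delta$ at all. Your closing caveats are well placed: the conditional reduction does need the honest server's selective-decryption discipline, and for clause~3 the public $\ve{j'}$ and $\ve{\rresp}$ columns are handled not by equality across worlds but by the random choice of $\tau$ equalising their marginals, which is exactly what the paper's $C_0,\dots,C_3$ case analysis makes precise.
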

\begin{proof}[Proof (Sketch)] \textbf{(1)} Let $E^{\mathsf{prof}}_b$ for $b \in \{0,1\}$ denote the two worlds in the voter profiling experiment. Let the event $\mathsf{Bad}$ be defined as when the $\ERR_{i^*}$ or $\ER_{i^+}$ entry owned by $V_i$ gets audited or when $V_i$ participates in $\indrraudit$ or $\indcraudit$. Clearly, $\Pr[\mathsf{Bad}] \leq \delta_1(n,\alpha)$, where $\delta_1(n,\alpha):=1 - (\mathsf{Hyp}(n,\alpha,1,1))^2(1-\alpha/n)^2$ (modelling $H_{\alpha}$ as a random oracle). Thus, $\adv$'s advantage in distinguishing between $E^{\mathsf{prof}}_0$ and $E^{\mathsf{prof}}_1$ is at most $\delta_1(n,\alpha) + \mathsf{adv}_{\neg \mathsf{Bad}}$, where $\mathsf{adv}_{\neg \mathsf{Bad}}$ denotes $\adv$'s advantage in case $\mathsf{Bad}$ does not happen. If $\mathsf{Bad}$ does not happen, $V_i$'s encrypted $\ERR$ or $\ER$ entries never get decrypted. This follows because
	\begin{itemize}[leftmargin=*] 
		\item[-] By the soundness of $\rho_{\mathsf{shuf}}$, the encryptions in $\ER$ are simply permutations and re-encryptions of the encryptions in $\ERR$ (implying in particular that the number of $\ER$ entries containing $V_i$'s information cannot be inflated).
		\item[-] During the $\register$ protocol, any voter $V \in \Voters$ obtains the decryption of any component of $\ERR_{i^*}$ only if $V$ is denied and a photograph $\bidpreR$ capturing $V$ matches the decryption $\bidR_{i^*}$ of $c_{\bidR_{i^*}}$ capturing $V_i$, implying that $V$ must be $V_i$ by Assumption \ref{assumption:NT}, which is a contradiction because honest voter $V_i$ does not get denied when $R_m$, $P_l$ and $E_j$ are honest. \label{enum:register-receipt-leak}
		\item[-] During the $\cast$ protocol, any voter $V \in \Voters$ obtains the decryption of any component of $\ER_{i^+}$ only if $V$ is denied and a photograph $\bidpreC$ capturing $V$ matches the decryption $\bidR'_{i^+}$ of $c'_{\bidR_{i^+}}$ capturing $V_i$, implying that $V$ must be $V_i$ by Assumption \ref{assumption:NT}, which is a contradiction as in case \ref{enum:register-receipt-leak}.
	\end{itemize}
	Further, all NIZKs issued for $V_i$ can be simulated. Thus, $\mathsf{adv}_{\neg \mathsf{Bad}}$ is negligible by the IND-CPA security of $\ths$. Thus, $\adv$'s advantage in distinguishing between $E^{\mathsf{prof}}_0$ and $E^{\mathsf{prof}}_1$ is at most negligibly more than $\delta_1(n,\alpha)$.
	
	\textbf{(2)} Let $E^{\mathsf{fa}}_b$ for $b \in \{0,1\}$ denote the two worlds in the forced abstention experiment. Let the event $\mathsf{Bad}$ be defined as when the $\ER_{i_0^+}$ entry owned by $V_{i_0}$ or the $\ER_{i_1^+}$ entry owned by $V_{i_1}$ gets audited or when $V_{i_0}$ or $V_{i_1}$ participates in the $\indcraudit$ protocol. Proceeding as above, $\adv$'s advantage in distinguishing between $E^{\mathsf{fa}}_0$ and $E^{\mathsf{fa}}_1$ is at most $\delta_2(\alpha, n) + \mathsf{adv}_{\neg \mathsf{Bad}}$, where $\delta_2(\alpha,n):=1 - \mathsf{Hyp}(n,\alpha,2,1)(1-2\alpha/n)$. Also, as above, if $\mathsf{Bad}$ does not happen, entries $\CI_{i_0^+}$ and $\CI_{i_1^+}$ never get decrypted. Further, since $P_l$ uploads dummy $\ev$ and $c_{\rho_{\ev}}$ for each $\ER$ entry for which the voter did not show up, $\mathsf{adv}_{\neg \mathsf{Bad}}$ is negligible by the IND-CPA security of $\ths$. (Note that we assume that the dummy $\ev$ given by $\pietoev$ is indistinguishable from a non-dummy one - see \S\ref{sec:protocol-structure}). Thus, $\adv$'s advantage in distinguishing between $E^{\mathsf{fa}}_0$ and $E^{\mathsf{fa}}_1$ is at most negligibly more than $\delta_2(\alpha, n)$.
	
	\textbf{(3)} Let $E^{\mathsf{ul}}_b$ for $b \in \{0,1\}$ denote the two worlds in the unlinkability experiment. Let the event $\mathsf{Bad}$ be defined as when the $\ERR_{i^*}$ or $\ER_{i^+}$ entry owned by $V_{i}$ gets audited. Proceeding as above, $\adv$'s advantage in distinguishing between $E^{\mathsf{ul}}_0$ and $E^{\mathsf{ul}}_1$ is at most $\delta_3(n,\alpha) + \mathsf{adv}_{\neg \mathsf{Bad}}$, where $\delta_3(n,\alpha):=1 - (\mathsf{Hyp}(n,\alpha,1,1))^2$. If $\mathsf{Bad}$ does not happen, as above, entries $\ERR_{i^*}$, $\ER_{i^+}$ and $\CI_{i^+}$ never get decrypted. Further, the proof of shuffle $\rho_{\mathsf{shuf}}$ can be simulated since the secret permutation of the honest $S_{k^*}$ is uniformly randomly chosen. Below, we argue that $V_i$'s participation in $\indrraudit$ or $\indcraudit$ protocols does not give $\adv$ any additional advantage. 
	
	Note that since $V_{i}$ is honest and all $R_m$, $E_j$ and $P_l$ are honest, $V_{i}$ always gets $\accept$ registration and cast receipts. Also, the $\accept$ cast receipt does not give any additional advantage to $\adv$ by the indistinguishability of $\ev$ obtained from the underlying E2E-V protocol. Thus, we focus on the $\accept$ registration receipt. 
	
	First, if $V_i$ chooses $\beta=2$ during $\indrraudit$ (gives the right receipt half), then $\adv$'s advantage is negligible by the IND-CPA security of $\ths$ and the zero-knowledgeness of $\rho_{\mathsf{eq}}$. If $V_i$ chooses $\beta=0$ (gives the top half), then by the IND-CPA security of $\ths$ and the information theoretically secure secret-sharing of $\rho_{\mathsf{eq}}$, $\adv$'s advantage is negligibly close to the case where $\adv$ only obtains $\uid$ chosen by $V_i$ (note that $\bidR$ in the receipt must be such that $\guid(\bidR, \uid)=1$ and thus it could be simulated using $\uid$, say given $V_i$'s identity card). Similarly, if $V_i$ chooses $\beta=1$ (gives the bottom half), $\adv$'s advantage is negligibly close to the case where $\adv$ only obtains $\ed,j$ chosen by $V_i$.
	
	Note that there are four uniformly random choices for $V_i$'s input: $C_0 = (\uid_0, \ed_0, j_0)$, $C_1 = (\uid_0, \ed_1, j_1)$, $C_2 = (\uid_1, \ed_0, j_0)$ and $C_3 = (\uid_1, \ed_1, j_1)$, where $C_0$ and $C_3$ correspond to the first world and $C_1$ and $C_2$ correspond to the second world. If $\adv$ obtains $\uid$, it needs to decide whether $C_0$ or $C_1$ was chosen if $\uid=\uid_0$ and whether $C_2$ or $C_3$ was chosen if $\uid=\uid_1$. If $\adv$ obtains $\ed,j$, it needs to decide whether $C_0$ or $C_2$ was chosen if $(\ed,j)=(\ed_0,j_0)$ and whether $C_1$ or $C_3$ was chosen if $(\ed,j)=(\ed_1,j_1)$. $\adv$'s advantage in all these decisions is zero.
	
	Thus, $\adv$'s advantage in distinguishing between $E^{\mathsf{ul}}_0$ and $E^{\mathsf{ul}}_1$ is at most negligibly more than $\delta_3(n,\alpha)$.
\end{proof}	

\section{Practicalities}
\label{sec:pracs}

We wrote a small computer program (available at \citep{artifact}) to estimate $\epsilon(n_{\d}$, $\alpha_{\d}$, $f_{\d}$, $N_{\s}$, $n_{\s}$, $\alpha_{\s}$, $f_{\s})$ and $\delta(n,\alpha)$ reported in Theorems \ref{thm:soundness} and \ref{thm:privacy} for real elections. We consider a realistic large election with both honest eligible casting voters $n_{\d}=n$ and total eligible casting voters $n_{\s}$ such that $n,n_{\d}, n_{\s} \approx 10^6$ (a million voters) and $n_s > N_s/2$ (i.e., more than half the registered eligible voters cast their vote). If the reported winning margin is $>2\%$, then assuming all of it to be due to eligibility frauds, either $f_{\s}>0.01n_{\s}$ or $f_{\d}>0.01n_{\d}$. Then, as an example, for $\alpha_{\s}, \alpha_{\d}, \alpha =2500$, we get $\epsilon(n_{\d}$, $\alpha_{\d}$, $f_{\d}$, $N_{\s}$, $n_{\s}$, $\alpha_{\s}$, $f_{\s}) < 0.0005$ and $\delta(n,\alpha)<0.01$. The $\epsilon$ estimate means that by verifying only $2500/10^6 = 0.25\%$ random voters' receipts and ERR/ER entries, the probability of not detecting any eligibility fraud is less than $0.0005$. The $\alpha$ value also means that less than 2500 manual checks, as part of $\gelg$, $\gmatch$ and $\glive$ oracles, are required to achieve this guarantee. The $\delta$ estimate can be interpreted to say that the soundness guarantee comes at the cost of leaking the identity, eligibility or participation information of at most $1\%$ random voters.

We also implemented our entire protocol (Figures \ref{fig:protocol-register}-\ref{fig:protocol-cast}) to estimate its performance (also available at \citep{artifact}). We used dummy oracles here that satisfy Assumptions \ref{assumption:CP}-\ref{assumption:CI} but do not actually implement any face matching, etc., because these checks would likely be manual in a real deployment. We used the threshold El Gamal scheme \citep{threshold-cryptography} to implement $\ths$. Although photographs may not fit within the message space of this $\ths$, secret-sharing actual images among the backend servers, indexing them by a collision-resistant hash and encrypting only the hash fixes this. We used the commitment-consistent proof of shuffle techniques of \citep{commitment-consistent-proof-shuffle} and \citep{terelius-restricted-shuffles} for implementing $\rho_{\mathsf{shuf}}$ during $\prepareRoll$. We also implemented batching techniques of \citep{bellare-batch-verification} for efficiently verifying proofs of correct threshold decryption of $\ve{c_{\vid}}'$ and $\ve{c_j}'$.

All our experiments ran on a single-core Intel Xeon W-1270 CPU with clock speed 3.40GHz and 64 GB RAM. For an election with $n = 10^6$ voters, the $\prepareRoll$ protocol, which is the most-expensive step, takes approximately 15 hours for both the auditor and each backend server. Given that the $\prepareRoll$ step happens after the registration closing deadline before polling, such delay is acceptable for large elections. Also, parallelisation opportunities exist, which are expected to bring this time to within an hour.	$\register$ and $\cast$ are instantaneous for each voter. Receipts contain less than $\approx~1.65$ KB machine-readable data (modulo the remark about storing photographs above), which can fit in standard QR codes.

\section{Conclusion}

We identified security and privacy threats with electoral rolls and polling-booth eligibility verification processes and formalised their security requirements under a non-cryptographic notion of eligibility. We then presented a concrete electoral roll protocol providing practical protection in large elections against $a)$ registration and polling-stage eligibility frauds and $b)$ voter profiling, linking and forced abstention attacks. Our guarantees crucially depend upon a trusted deduplicated primary identity system. Making such systems publicly verifiable would be the next big challenge in achieving truly secure electoral rolls deployable for real elections.

\section*{Acknowledgments}

Prashant Agrawal is supported by Pankaj Jalote Doctoral Grant and Pankaj Gupta Chair in Privacy and Decentralisation.

\balance

\bibliography{electoralrolls}
\bibliographystyle{IEEEtran}

\end{document}